\newcommand{\med}{\;|\;}
\newtheorem{theorem}{Theorem}
\newtheorem{corollary}{Corollary}
\newtheorem{lemma}{Lemma}
\newtheorem{remark}{Remark}
\def \bX{\boldsymbol{X}}
\def \bbH{\boldsymbol{H}}
\def \bS{\boldsymbol{S}}
\def \H{{\cal H}}
\def \D{{\cal D}}
\def \J{{\cal J}}
\def \G{{\cal G}}
\def \L{{\cal L}}
\def \bW{\boldsymbol{W}}
\newcommand{\bbe}{\mathbb E}
\def \ber{\mbox{Bernoulli}}
\def \dGamma{\mbox{Gamma\;}}
\newcommand\dotlt{\;{\overset \cdot \leq}\;}
\newcommand{\dff}{\stackrel{\scriptscriptstyle\triangle}{=}}
\newcounter{MYtempeqncnt}
\begin{document}

\title{\huge Adaptive Sensing of Congested Spectrum Bands}

\author{Ali~Tajer,~\IEEEmembership{Member,~IEEE,}~Rui~M.~Castro,~and~Xiaodong~Wang,~\IEEEmembership{Fellow,~IEEE,}
\thanks{A. Tajer is with the Electrical Engineering Department, Princeton University, Princeton, NJ 08544, e-mail:
tajer@princeton.edu}
\thanks{R. M. Castro is with the Department of Mathematics \& Computer Science, Eindhoven University of Technology, Eindhoven, Netherlands, e-mail: rmcastro@tue.nl}
\thanks{X. Wang is with the Electrical Engineering Department, Columbia University, New York, NY 10027, e-mail: wangx@ee.columbia.edu}
\thanks{This work was supported in part by the U.S. National Science Foundation (NSF) under grant CCF-0726480, and by the U.S. Office of Naval Research (ONR) under grant N00014-08-1-0318.}}

\maketitle

\begin{abstract}

Cognitive radios process their sensed information collectively in order to opportunistically identify and access under-utilized spectrum segments (spectrum holes). Due to the transient and rapidly-varying nature of the spectrum occupancy, the cognitive radios (secondary users) must be agile in identifying the spectrum holes in order to enhance their spectral efficiency. We propose a novel {\em adaptive} procedure to reinforce the agility of the secondary users for identifying {\em multiple} spectrum holes simultaneously over a wide spectrum band. This is accomplished by successively {\em exploring} the set of potential spectrum holes and {\em progressively} allocating the sensing resources to the most promising areas of the spectrum. Such exploration and resource allocation results in conservative spending of the sensing resources and translates into very agile spectrum monitoring. The proposed successive and adaptive sensing procedure is in contrast to the more conventional approaches that distribute the sampling resources equally over the entire spectrum. Besides improved agility, the adaptive procedure requires less-stringent constraints on the power of the primary users to guarantee that they remain distinguishable from the environment noise and renders more reliable spectrum hole detection.
\end{abstract}

\begin{keywords}
Adaptive, agility, sparsity, spectrum sensing, wideband 
\end{keywords}
\section{Introduction}
\label{sec:intro}

\subsection{Background}
The notion of cognitive radio has emerged as a mean to alleviate the scarcity of available frequency spectrum through accommodating unlicensed (secondary) users within the under-utilized spectrum bands licensed to the legacy (primary) users. For this purpose, secondary users monitor the spectrum in order to identify and exploit the under-utilized segments. Such coexistence of the primary and secondary users is constrained by the necessary warranties on the quality of service provided for the primary users.

For detecting an under-utilized segment in the wideband spectrum, which we hereinafter call a spectrum hole, secondary users face two major challenges. First, the spectrum holes are spread across the wide band  and often their occupancy status changes rapidly, i.e., the holes might not remain unoccupied for a long time. Therefore, the amount of time spent searching for a spectrum hole could potentially constitute a considerable fraction of the time that a spectrum hole remains vacant. Such rapid changes are governed mainly by the spectral activities of the active (primary and other potential secondary) users seeking access to the same spectrum. Therefore, the secondary users should be {\em agile} enough to detect a spectrum hole in substantially less time than the period in which the hole remains vacant. Agility becomes even more prominent in wideband systems where the secondary users have to monitor a larger number of channels with a {\em sparse} distribution of spectrum holes. The current statistics about the spectrum occupancy patterns provide that a considerable portion of the spectrum is under-utilized, hence promoting the notion of cognitive communication. Nevertheless, by envisioning the futuristic scenario of cognitive networks, enabling cognitive communication allows many unlicensed users to compete for the same spectrum resources. Under this envisioned scenario and by taking into account the ever-increasing demand for data communication, the vacant spectrum bands will not be as abundant as they are, which makes the sparsity assumption quite reasonable. We would like to also remark that the same futuristic sparsity assumption on spectrum vacancy is also envisioned in \cite{Giannakis:SP10, Han:11}. The second challenge is to avoid harming the communication of the primary users. For this purpose the secondary users must {\em reliably} distinguish the holes from the spectrum segments accessed by the primary users, regardless of how weak the primary users are.

Sensing the spectrum, being a major task of the secondary users, has received a considerable amount of research interest, see, e.g., \cite{Haykin:09, Ma:09,Yucek:09, Akyildiz:06,Lai:IT10_Submitted, Quan:SPM08}. We review some of the related works in the following section.

\subsection{Related Works}
There exist, broadly, three approaches for identifying  spectrum vacancies within a wideband spectrum. A notable approach on agile spectrum sensing in a wideband spectrum is the quickest sequential search scheme of \cite{Lai:IT10_Submitted}. In this scheme the spectrum is split into smaller narrowband channels and the secondary user senses one channel at a time. After accumulating enough information about each channel it decides whether the channel is a hole or is occupied. If the channel is decided to be a hole, the search is terminated and otherwise the process is carried on until a hole is spotted. As the spectrum holes become less frequent (sparser) enhancing the agility of the secondary users becomes more significant It is noteworthy that when the spectrum holes are {\em not} sparse, the quickest detection approach \cite{Lai:IT10_Submitted} is very effective. In such scenarios the {\em expected} density of the spectrum holes is high and it becomes very likely to encounter a spectrum hole  after sensing a few channels. Therefore, by deploying the sequential quickest detection approach the {\em expected} amount of sampling resources required for identifying multiple holes is small and thereof the overall detection process will be very agile. For a sparse distribution of spectrum holes, however, sensing the channels sequentially for identification of {\em multiple} holes will substantially lengthen the {\em expected} sensing time. Therefore, the objective of our proposed adaptive procedure is to address the issue of agility for a sparse distribution of spectrum holes and when we are interested in identifying {\em multiple} holes, as it is more challenging and realistic in practice. We also remark on the recent results on sequential binary hypothesis testing of multiple sequences \cite{Malloy:ISIT11} where a sparse number of the sequences are generated according to the null distribution and the majority of them according to an alternative distribution. It develops a sequential thresholding-based test for identifying {\em almost} all the sequences generated according to the null distribution. These tests tend to identify {\em almost} all of the sequences generated by the null distribution. 

In another approach it is assumed that the wideband channel is heavily under-utilized and it is used only rarely and sparsely. In this approach the cognitive radios exploit the sparsity structure of the wideband channel and construct compressed sensing-based machineries for estimating the power spectral density (PSD) of the wideband channel~\cite{Tan:ICASSP07, Leus:ICASSP09, Wang:GC10, Vilar:ICASSP10, Hong:GC10, Yu:ICASSP08, Tian:GC08}. These approaches are further extended to also track temporal variations of the spectral occupancy during sensing~\cite{Angelosante:ICASSP09, Angelosante:DSP09}. Exploiting the sparsity empowers the cognitive radios to sample the signal activity over the channel at a sub-Nyquist rate, which expedites the process of estimating the PSD. While sampling at a sub-Nyquist rate substantially reduces the number of samples required for reconstructing the PSD, the number of samples required yet scales linearly with the bandwidth of signals of the active users. Therefore, the amount of required samples for reconstructing the PSD of large-bandwidth signals (e.g., video signals) is substantially large. When each of the channels is narrowband, however, estimating the PSD encompasses much redundancy as the ultimate objective of the cognitive user is to make a 1-bit decision about a channel (hole vs. occupied). By taking this fact into account, our proposed sensing procedure further reduces the sampling resources below the level required by the compressed-sensing-based methods.

In another approach tailored for cognitive {\em networks} a group of cognitive radios are clustered to collaboratively perform spectrum sensing~\cite{Giannakis:SP10, Quan:SPM08, Vilar:ICASSP10, Wang:ITA09, Zeng:ICC10}. At the cost of extra communication among the cognitive radios, they can share their individual perception about the spectral occupancy and reach a more reliable decision through exchanging their individual perceptions (coordination). Such information exchange is viable in cognitive networks with a fusion center or an established infra-structure through which the autonomous users can be coordinated. In this paper we consider {\em ad-hoc} cognitive networks, in which each cognitive user operates autonomously and independently of the rest of users. In such a scenario, the task of reliable spectrum sensing must be carried out by each user independently. It is noteworthy that our approach can also be integrated with the existing collaborative sensing methods.

\subsection{Contribution}
In this paper we propose a novel adaptive spectrum sensing procedure for ad-hoc cognitive networks.  The underlying premise is that one can adaptively decide how to spend a given sampling budget in the course of the spectrum sensing process, focusing the sensing resources on more promising segments of the spectrum. This is in contrast to the conventional spectrum sensing schemes with a pre-defined sensing strategy. Our proposed adaptive procedure consists of two phases, namely the {\em exploration} and {\em detection} phases. The errand of the exploration phase is to swiftly eliminate a substantial portion of the occupied channels and at the same time retain {\em most} of the holes. Exploration is carried out through an iterative process, where in each iteration a group of less promising segments of the spectrum are identified and eliminated. The fundamental basis of the exploration phase is that it can be done accurately even if the sensed information (measurements) is very rough and noisy. During exploration, each iteration further monitors the spectrum and improves on the outcome of the previous iteration. The exploration phase is subsequently followed by a detection phase for identifying multiple spectrum holes among the reduced number of channels retained.

The analysis reveals that, if we perform multiple cycles of the proposed explorations, for achieving a certain level of reliability, the ratio of the sampling budget required by the proposed adaptive procedure to that required by the non-adaptive procedure is approximately $\frac{2}{M}$, where $M$ is the sampling budget per channel in the non-adaptive procedure. Furthermore, the adaptive procedure requires less-stringent constraints on the power of the active users to guarantee a successful hole detection. In particular, let $\epsilon_n$ be the probability that each channel is a spectrum hole. Then the requirement for identifying spectrum holes by the non-adaptive procedure is that the power of the active users must scale\footnote{For convenience we will use the following asymptotic notations: for two sequences $a_n>0$ and $b_n>0$ we say that $a_n=\omega(b_n)$ if $\lim\inf_{n\rightarrow\infty} \frac{a_n}{b_n} =\infty$ and $a_n=o(b_n)$ if $\lim\sup_{n\rightarrow\infty} \frac{a_n}{b_n} =0$. Also $a_n\doteq b_n$ indicates asymptotic equality of $a_n$ and $b_n$, i.e., $\lim_{n\rightarrow\infty}\frac{a_n}{b_n}=1$.} as $\omega\left(\sqrt[M]{\frac{1}{\epsilon_n}}\right)$, where $M$ is the sampling budget per channel. By deploying the adaptive procedure this requirement reduces to approximately\footnote{The precise characterization is given in Section~\ref{sec:adaptive}.} $\omega\left(\sqrt[M']{\frac{1}{\epsilon_n}}\right)\ $, where $M'$ is an integer substantially larger than $M$.

The exploration phase and the idea of successively searching for the spectrum holes is inspired by the notion of \emph{distilled sensing} developed in~\cite{haupt:10_submitted, haupt:08, haupt:09}. In distilled sensing it is shown that given a fixed sampling budget certain signals that are detectable/estimable using adaptive measurements cannot be recovered using non-adaptive strategies. The results show that closing the loop between the data analysis and collection processes can yield very significant gains. Despite the similarities between our exploration phase and distilled sensing in~\cite{haupt:10_submitted, haupt:08, haupt:09}, there exist also significant differences mainly in two aspects:
\begin{enumerate}
  \item We desire to identify any {\em arbitrary} number (and not necessarily all) of the spectrum holes where distilled sensing aims to identify {\em all} data points that bear a certain sparsity structure.
  \item The observation models are different. The works in \cite{haupt:10_submitted, haupt:08, haupt:09} considers a Gaussian observation model and the theoretical guarantees depend on the Gaussian assumption. On the other hand, in the spectrum sensing problem in hand we deal with data drawn from Gamma distributions. The analysis for the Gamma distribution and Gaussian distributions are significantly different. This is due to the fact that the  experimental design for distilled sensing is primarily shaped up by the statistical model of the data samples, which in turn necessitates independent analysis for each class of distributions.
\end{enumerate}
The remainder of this paper is organized as follows. In Section~\ref{sec:descriptions} the system model and the formal statement of the problem are provided. Section~\ref{sec:non_adaptive} introduces the non-adaptive hole detection scheme, which serves as a baseline for assessing the performance of the proposed adaptive procedure. The non-adaptive hole detection scheme also represents the structure of the detection phase that we deploy in the adaptive procedure. In Section~\ref{sec:adaptive} we describe the proposed adaptive hole detection procedure and its related analysis. Simulation results are provided in Section~\ref{sec:simulations} and Section~\ref{sec:conclusions} provides some concluding remarks.

\section{System Descriptions and Problem Statement}
\label{sec:descriptions}

\subsection{System Setup}
\label{sec:problem_formulation}

Consider a {\em wideband} spectrum to be shared by license-holding (primary) users and cognitive (secondary) users with interference-avoiding spectrum access. The primary users are formally allocated a number of channels in the spectrum and have the right to instantaneously access the channels whenever desired. The secondary users, on the other hand, are allowed to opportunistically seek the portions of the spectrum under-utilized by the active users (either primary users or secondary ones currently using the spectrum) and access them. Our objective is to provide an \emph{agile} and \emph{reliable} mechanism for identifying such communication opportunities for the secondary users.

We assume that the available spectrum consists of $n$ non-overlapping channels, indexed by $\{1,\ldots,n\}$. At a given instance, the active users communicate over some of these channels, which we refer to as {\em occupied channels},  and under-utilize the rest, which we call {\em spectrum holes}. We assume that the occupancy status of the spectrum remains unchanged during spectrum sensing process and and consider a probabilistic model for the occupancy of the spectrum. Let the Bernoulli random variable $Z_i$, for $i\in\{1,\ldots,n\}$, indicate the occupancy state of the $i^{th}$ channel, where $Z_i=1$ means that the $i^{th}$ channel is occupied and $Z_i=0$ states that the $i^{th}$ channel is a spectrum hole. We assume that the channels have statistically independent occupancy states and each channel is a spectrum hole with probability $\epsilon_n$. Therefore,
\begin{equation}\label{eq:ber}
    Z_i\;\stackrel{\mbox{\tiny i.i.d.}}{\sim}\; \ber(1-\epsilon_n)\ .
\end{equation}
Let us also denote the set of the indices of the spectrum holes and the occupied channels by
\begin{equation}\label{eq:H}
    \H_0\dff\big\{i\in\{1,\ldots,n\}:Z_i=0\big\}\ ,
\end{equation}
\begin{equation}    
 \H_1\dff\big\{i\in\{1,\ldots,n\}:Z_i=1\big\}\ .
\end{equation}
In order to capitalize on the availability of spectrum holes, the secondary users monitor the spectrum via channel measurements. Each measurement of channel $i$, denoted by $X_i$, is of the form
\begin{equation}
\label{eqn:single_observation}
  X_i\dff\sqrt{p_i}\; H_i\cdot S_i\cdot Z_i+W_i,\quad\mbox{for}\quad i\in\{1,\dots,n\}\ ,
\end{equation}
where $p_i$ accounts for the combined effect of the transmission power of the user active on the $i^{th}$ channel and its associated signal attenuation (for $i\in\H_1$), $H_i$ denotes the flat-fading channel from the active user to the secondary user and is distributed as zero-mean unit-variance complex Gaussian. Also, $S_i$ denotes the normalized unit-power signal of the active user active on the $i^{th}$ channel, for $i\in\H_1$. Finally $W_i$ denotes the additive white Gaussian channel noise distributed as zero-mean unit-variance complex Gaussian. The measurements are statistically independent over time and channels.


Finally, we assume that the powers of the active users sensed by the secondary user are lower-bounded by $\gamma_n$ for some arbitrary $\gamma_n>0$, i.e., $p_i\geq\gamma_n$ for all $i\in\H_1$. Acquiring $\gamma_n$ is feasible for secondary users if the geographical extent (and subsequently the maximum path-loss) and the range of the {\em transmission} power of the active users are known to them. Clearly the likelihood of channel vacancy $\epsilon_n$ and the lower bound $\gamma_n$ influence the precision of successfully detecting a hole; increasing $\gamma_n$ and $\epsilon_n$ enhances the reliability of the spectrum sensing process.

\subsection{Spectral Monitoring Goal}
\label{subsec:goal}

The main objective is to use measurements of form (\ref{eqn:single_observation}) in order to identify $T\in\mathbb{N}$ spectrum holes, i.e., $T$ element of $\H_0$. The conventional non-adaptive spectrum sensing procedures perform some pre-specified measurements of the spectrum and possibly locate $T$ spectrum holes based on the information extracted from the measurements. This strategy is non-adaptive, in the sense that the measurement process is fixed \emph{a priori} and does not change during the experiment. In contrast, we devise an adaptive procedure such that the measurement strategy is adjusted sequentially such that future measurements use information gathered from previous ones. We demonstrate that such measurement adaptation substantially improves the {\em reliability} and {\em agility} of the secondary users in detecting spectrum holes.

We assess the agility by quantifying the amount of time required for sensing the channel. Agility depends linearly on the aggregate amount of measurements (sampling budget) made over the entire wideband spectrum, assuming that the measurements take the same amount of time. Therefore, increasing the sampling budget translates into more delay (or less agility) in detection. On the other hand, increased sampling budget improves the detection reliability. Therefore, for detecting multiple spectrum holes there exists a tradeoff between agility and reliability. Incorporating the impact of channel occupancy likelihood (characterized by $\epsilon_n$) and the power of active users establishes that in both adaptive and non-adaptive procedures, there exists an inherent interplay among the agility, reliability, channel occupancy likelihood, and the power of active users.

We characterize such interplays in both non-adaptive and the proposed adaptive sensing procedures. Comparing these interplays subsequently demonstrates the agility and reliability gains afforded by the adaptive procedure. As we consider wideband channels, the analysis provided are asymptotic with respect to large number of channels, $n$.

\section{Non-Adaptive Spectrum Detection}
\label{sec:non_adaptive}

In this section we analyze an optimal non-adaptive detection scheme for locating $T$ spectrum holes. The development of this scheme and the ensuing analysis serve a two-fold purpose. On one hand this detection scheme is also deployed in the detection phase of the adaptive procedure proposed in Section~\ref{sec:adaptive}. On the other hand, it offers a baseline for assessing the gain yielded by the adaptive procedure.

\subsection{Non-Adaptive Sensing Procedure}
Constructing a non-adaptive spectrum sensing procedure involves two issues. The first one pertains to the experimental design, which is the design of the information-gathering process. In our setup the experimental design elucidates the distribution of the sampling budget among the channels across the wideband spectrum. The second issue is to design a detector based on some optimality criterion. The decision on the experimental design hinges on the expected proportion of spectrum holes, i.e., $\epsilon_n$. Prior to the sensing procedure, there is no extra side information about spectral activity in any of the channels and all the channels are equally likely to be spectrum holes. Due to the inherent symmetry and the sparse distribution of the spectrum holes ($\epsilon_n\in o(1)$), we assume that the experimental design measures all channels equally (note that if $\epsilon_n$ is large, this is not the best strategy).

Given this experimental design, it is straightforward to construct a detector that is optimal in the sense that it maximizes the {\em a posteriori} probability of successfully detecting $T$ spectrum holes, i.e., the maximum {\em a posteriori} (MAP) detector. Suppose that the measurement budget is $B\in{\mathbb N}$, meaning that we make a total of $B$ measurements of the form~\eqref{eqn:single_observation}. Therefore, the experimental design obtains $M\dff\lfloor B/n\rfloor$ measurements of the form \eqref{eqn:single_observation} per channel. Define $X_i(j)$ as the $j^{th}$ measurement of the $i^{th}$ channel. Recalling the assumption that the occupancy statuses do not change over the course of sensing, the terms $\{p_i\}$ and $\{Z_i\}$ remain constant for all $j\in\{1,\dots, M\}$, while the fading coefficient $H_i(j)$, the noise term $W_i(j)$, and the transmitted symbols $S_i(j)$ might change from one observation to another. Defining the vectors $\bX_i\dff[X_i(j)]_j$, $\bbH_i\dff[H_i(j)]_j$, $\bS_i\dff[S_i(j)]_j$, and $\bW_i\dff[W_i(j)]_j$, based on \eqref{eqn:single_observation} the observation set is given by
\begin{equation}\label{eq:D}
    \D_n\dff\big\{\bX_i:\bX_i=\sqrt{p_i}\ Z_i\ \bbH_i\circ \bS_i+\bW_i \;\;\mbox{for}\;\; i=1,\dots,n\big\} ,
\end{equation}
where ``$\circ$'' denotes the Hadamard product. From \eqref{eqn:single_observation} it follows that given the occupancy status $Z_i$, the observation sample $X_i(j)$ has a complex Gaussian distribution with mean zero and variance $(1+p_iZ_i)$, i.e.,
\begin{equation}\label{eq:X_dist}
    X_i(j)\med Z_i\stackrel{\mbox{\tiny i.i.d.}}{\sim}{\cal N}_{\mathbb C}(0,1+p_iZ_i),\quad\mbox{for}\quad j\in\{1,\dots,M\}\ .
\end{equation}
According to \eqref{eq:ber}, $Z_i$ has a Bernoulli distribution with mean $1-\epsilon_n$. Therefore the measurement vector $\bX_i$ is a sample from a mixture distribution with the probability density function
\begin{eqnarray}\label{eq:mixture}
    \nonumber P(\bX_i)&=&\epsilon_n\cdot\frac{1}{\pi^M}\exp{\left(-\|\bX_i\|^2\right)}\\
    &+& (1-\epsilon_n)\cdot\frac{1}{(\pi(1+p_i))^M} \exp{\left(-\frac{\|\bX_i\|^2}{1+p_i}\right)}\ .
\end{eqnarray}
Given the observation set $\D_n$, the maximum {\em a posteriori} (MAP) criterion for identifying $T$ spectrum holes entails finding the subset of channels with cardinality $T$ that has the highest probability of containing only holes and is formalized in the next remark.
\begin{remark}\label{rem:MAP}
The MAP rule for detecting $T$ holes is given by
\begin{eqnarray}\label{eq:MAP}
\nonumber \widehat {\cal U}_{\tiny{\mbox{MAP}}}^{\tiny{\mbox{NA}}} &\dff& \arg\max_{{\cal U}:\ |{\cal U}|=T} P({\cal U}\subseteq \H_0\med \D_n)\\
& = &\arg\min_{{\cal U}:\ |{\cal U}|=T}\sum_{i\in {\cal U}}U_i\ ,
\end{eqnarray}
where
\begin{equation}\label{eq:U}
    U_i\dff\frac{1}{(1+p_i)^M}\exp{\left(\frac{p_i}{1+p_i}\|\bX_i\|^2\right)}.
\end{equation}
\end{remark}
\begin{proof}
See Appendix \label{app:lem:MAP}.
\end{proof}
 Due to the dependence of $\{U_i\}$ on $\{p_i\}$, the posterior probability heavily depends on the power of the active users $\{p_i\}$. The uncertainties about $\{p_i\}$ translates into uncertainty about any performance measure of interest, which in turn prevents us from obtaining a {\em globally} (over all power realizations) optimal hole detector. As a remedy, by alternatively exploiting the notion of {\em robustness} in the {\em worst-case} we can offer some worst-case guarantees. In particular, we look at the realizations of $\{p_i\}$ which yield the {\em worst-case} performance for detecting $T$ holes. For this purpose, define $\Psi(\gamma_n)$ as the class of active users satisfying the power constraint $p_i\geq\gamma_n$ for all $i\in\H_1$. Then, the worst-case detection performance in the $\Psi(\gamma_n)$ class of active users is given by
\begin{equation*}
    \min_{\{p_i\}\subseteq\Psi(\gamma_n)}P(\widehat {\cal U}_{\tiny{\mbox{MAP}}}^{\tiny{\mbox{NA}}}\subseteq\H_0)\ .
\end{equation*}
Note that for $i\in\H_1$, the $i^{th}$ channel becomes less-distinguishable from a spectrum hole as $p_i$ decreases. Therefore, due to the independence among the distinct channels, the weakest detection performance, which occurs when the non-holes are least-distinguishable from the holes, corresponds to the smallest possible choices of $\{p_i\}$, i.e., $p_i=\gamma_n$ for $i\in\H_1$. For this setting, finding the indices of the $T$ smallest elements of the set $\{U_1,\dots,U_n\}$ becomes equivalent to finding the indices of the $T$ smallest elements of the set $\{\|\bX_1\|^2,\dots,\|\bX_n\|^2\}$. Therefore, in order to locate $T$ spectrum holes the MAP detector requires only the sufficient statistics $Y_i\dff\|\bX_i\|^2$ for $i=1,\dots, n$. Corresponding to the sequence of random variables $\{Y_1,\dots,Y_n\}$ we define $\{Y_{(1)},\dots,Y_{(n)}\}$ as the sequence of order statistics in an increasing order, e.g., $Y_{(m)}$ represents the $m^{th}$ smallest element of $\{Y_1,\dots,Y_n\}$. Based on this definition, the {\em robust} hole detector (with the non-adaptive procedure) can be formalized as follows.
\begin{remark}[Robust Spectrum Detector]
\label{remark:MAP} In the $\Psi(\gamma_n)$ class of active users when  $\epsilon_n=o(1)$, the robust non-adaptive spectrum detector for identifying $T$ holes is given by
\begin{equation}\label{eq:RMAP}
\widehat {\cal U}_{\tiny{\mbox{\rm rob}}}^{\tiny{\mbox{\rm NA}}}=\{i\in\{1,\dots,n\}:Y_i\leq Y_{(T)}\}\ .
\end{equation}
\end{remark}
\subsection{Asymptotic Performance}
Recalling the distribution of $X_i(j)$ given in \eqref{eq:X_dist}, the sufficient statistics $Y_i$ is distributed as
\begin{equation*}
   Y_i\med Z_i\sim \dGamma(M,1+p_i Z_i)\quad\mbox{for}\quad i=1,\dots,n\ ,
\end{equation*}
where $\dGamma(a,b)$ denotes a Gamma distribution\footnote{The random variable $G$ with distribution $\dGamma(a,b)$, where $a,b>0$, has a density with respect to the Lebesgue measure of the form $f_G(t)=t^{a-1} \frac{\exp(-t/b)}{b^a \Gamma(a)}\ \mathbf{1}\{t\geq 0\}$.} with parameters $a$ and $b$. For assessing the performance of the {\em robust} detection scheme, for all $i\in\H_1$ we set $p_i=\gamma_n$. Clearly the robust detector will make a detection error if  $\widehat {\cal U}_{\tiny{\mbox{rob}}}^{\tiny{\mbox{NA}}}\bigcap\H_1\neq\emptyset$. Let us define $u_t$ and $v_t$ as the indices of the $t^{th}$ smallest elements of the sets $\{Y_i\}$ and $\{Y_i:i\in\H_0\}$, respectively. From Remark~\ref{remark:MAP} the worst-case detection error probability is given by
\begin{eqnarray*}
P_{{\mbox{\tiny NA}}}(n)&\dff& \min_{\{p_i\}\subseteq\Psi(\gamma_n)} P\left(\widehat {\cal U}_{\tiny{\mbox{rob}}}^{\tiny{\mbox{NA}}}\cap\H_1\neq\emptyset\right)\\
&=& 1-\max_{\{p_i\}\subseteq\Psi(\gamma_n)} P\left(\widehat {\cal U}_{\tiny{\mbox{rob}}}^{\tiny{\mbox{NA}}}\cap\H_1 =\emptyset\right)\\ &=&  1- \max_{\{p_i\}\subseteq\Psi(\gamma_n)} P(\{u_1,\dots,u_T\}\subseteq\H_0)\ ,
\end{eqnarray*}
Note that having all the $T$ smallest measured energies belonging to the holes is equivalent to not having the energies of {\em all} non-holes greater than the $T$ smallest measured energies, i.e.,
\begin{align*}
    \{\{u_1,\dots,u_T\}&\in \H_0\}\\
    &\equiv\{Y_{v_t}\leq Y_i\ ,\;\forall t\in\{1,\dots,T\}\;\mbox{and}\;\forall i\in\H_1\}\\
    &\equiv \left\{\max_{t\in\{1,\dots,T\}}Y_{v_t}\leq\min_{i\in\H_1}Y_i\right\}\\
    &\equiv \left\{Y_{v_T}\leq\min_{i\in\H_1}Y_i\right\}\ .
\end{align*}
Therefore,
\begin{equation}\label{eq:P_NA2}
    P_{{\mbox{\tiny NA}}}(n)= 1-\max_{\{p_i\}\subseteq\Psi(\gamma_n)} P(Y_{v_T}\leq \min_{i\in\H_1}Y_i)\ ,
\end{equation}
Assessing $P_{{\mbox{\tiny NA}}}(n)$ as defined above relies on the properties of the order statistics of a set of random variables. The following lemma is instrumental for characterizing the distributions of the order statistics and evaluating $P_{{\mbox{\tiny NA}}}(n)$. This is a generalization of a well-studied problem in the context of extreme value theory that considers {\em the first} order statistic \cite{David:book} for Gamma distributions. In this lemma, we give the corresponding results when distribution evolution is allowed, i.e., the {\em number} and {\em distribution} of the involved random variables changes {\em simultaneously},  as well as the analysis for higher order statistics, for which the existing results are not applicable.
\begin{lemma}
\label{lmm:extreme}
Let $\{Y_i\}_{i=1}^{m}$ be a sequence of i.i.d. random variables distributed as $\dGamma(M,\alpha_m)$ and denote its corresponding sequence of order statistics by $\{Y_{(i)}\}_{i=1}^m$. Let $b_m\dff\alpha_m\left[\frac{\Gamma(M+1)}{m}\right]^{\frac{1}{M}}$ and for some $T\in\mathbb{N}$ define the sequence of random variables $W_{(i)}^m\dff\frac{Y_{(i)}}{b_{m}}$ for $i=1,\dots,T$. Then as $m\rightarrow\infty$, $W_{(i)}^m$ converges in distribution to a random variable $W_{(i)}$ with cumulative density function (CDF)
\begin{equation*}
    Q_{(i)}(w;m)\dff P(W_{(i)}<w)\doteq 1-\exp(-w^M)\sum_{k=0}^{i-1}\frac{w^{kM}}{k!}\ .
\end{equation*}
\end{lemma}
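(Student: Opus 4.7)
The plan is to reduce the order-statistic question to a counting question. The event $\{W_{(i)}^m < w\}$ coincides with $\{N_m(w) \geq i\}$, where $N_m(w) \dff \#\{j\in\{1,\dots,m\}: Y_j \leq b_m w\}$ counts the sample values falling below the scaled threshold. Since the $Y_j$ are i.i.d., $N_m(w)$ is Binomial$(m, p_m(w))$ with $p_m(w)\dff P(Y_1 \leq b_m w)$. So the whole lemma reduces to (i) evaluating $p_m(w)$ to leading order and (ii) invoking a Poisson limit for the Binomial.

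For step (i), I would expand the CDF of $\dGamma(M,\alpha_m)$ near the origin. Integrating the density $t^{M-1}e^{-t/\alpha_m}/(\alpha_m^M \Gamma(M))$ over $[0, b_m w]$ and using $e^{-t/\alpha_m} = 1 + O(t/\alpha_m)$ on that interval gives
\begin{equation*}
p_m(w) = \frac{(b_m w)^M}{\alpha_m^M \Gamma(M+1)} \bigl(1 + O(b_m w/\alpha_m)\bigr).
\end{equation*}
Plugging in $b_m = \alpha_m [\Gamma(M+1)/m]^{1/M}$ yields $p_m(w) = (w^M/m)(1 + o(1))$. In particular $p_m(w) \to 0$ and $m p_m(w) \to w^M$ as $m\to\infty$ for every fixed $w>0$, because the multiplicative error $b_m/\alpha_m = [\Gamma(M+1)/m]^{1/M}$ is of order $m^{-1/M}$.

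For step (ii), I would appeal to the classical Poisson limit theorem: if $N_m \sim \mbox{Binomial}(m,p_m)$ with $p_m \to 0$ and $m p_m \to \lambda$, then $N_m$ converges in distribution to Poisson$(\lambda)$. Setting $\lambda = w^M$,
\begin{equation*}
P(W_{(i)}^m < w) = P(N_m(w) \geq i) \longrightarrow 1 - e^{-w^M}\sum_{k=0}^{i-1}\frac{w^{kM}}{k!},
\end{equation*}
which is precisely the claimed $Q_{(i)}(w;m)$. Pointwise convergence of CDFs at continuity points (the limit is continuous on $(0,\infty)$) delivers the asserted convergence in distribution; the boundary cases $w=0$ and $w\to\infty$ are immediate from the expression.

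The only mildly delicate step is ensuring that the $O(b_m/\alpha_m)$ error in the Gamma CDF expansion is small enough after being amplified by the factor of $m$ intrinsic to the Poisson limit. This is automatic here: $m\cdot p_m(w)\cdot O(b_m/\alpha_m) = O(m^{-1/M}) = o(1)$, so the correction does not disturb the convergence $m p_m(w)\to w^M$. The rest is bookkeeping, and the generalization to $i>1$ that the lemma advertises comes for free because the Poisson tail $P(\mathrm{Poisson}(\lambda)\geq i)$ is what naturally appears once the Poisson limit is in hand.
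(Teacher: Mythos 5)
Your proof is correct and follows essentially the same route as the paper's: both reduce $P(W_{(i)}^m<w)$ to the binomial tail of the count of samples below $b_m w$, establish $F(b_m w)=\frac{w^M}{m}(1+o(1))$ from the Gamma CDF near the origin, and pass to the Poisson-type limit. The only difference is cosmetic — you invoke the classical Poisson limit theorem where the paper re-derives it by computing $\left[1-F\right]^m\to e^{-w^M}$ and $\binom{m}{k}\left[F/(1-F)\right]^k\to w^{Mk}/k!$ term by term.
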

\begin{proof}
See Appendix \ref{app:lmm:extreme}.
\end{proof}
\begin{figure*}[!b]
\hrulefill
\begin{eqnarray*}
\nonumber \lefteqn{\min_{\{p_i\}\subseteq\Psi(\gamma_n)} P\left(\left.\widehat {\cal U}_{\tiny{\mbox{rob}}}^{\tiny{\mbox{NA}}}\cap\H_1\neq\emptyset \right| \{Z_i\}\right)}\\
  &=&  1-P\left(b_n^0 W^n_{(T);0} < b_n^{1} W^n_{(1);1}\right) \\
  &=& 1- P\left( W^n_{(T);0} < \frac{b_n^{1}}{b_n^0}\cdot W^n_{(1);1}\right)\\
 & = & 1- \int_0^\infty q_{(1);1}(x;n_{1}) \int_0^{\frac{b_n^{1}}{b_n^0} x} q_{(T);0} (y;n_0)\  dy \ dx\\
 & = & 1- \int_0^\infty q_{(1);1}(x;n_{1})\ Q_{(T);0}\left(\frac{b_n^{1}}{b_n^0}\cdot x;n_0\right)\ dx\\
 & \doteq & 1- \int_0^\infty \underset{=Mx^{M-1}e^{-x^M}}{\underbrace{q_{(1);1}(x;n_{1})}} \left(1-\exp\left(-x^M \left(b_n^{1}/ b_n^0\right)^M\right)\sum_{k=0}^{T-1}\ \frac{x^{kM}\left(b_n^{1}/ b_n^0\right)^{kM}}{k!}\right)\ dx\\
&= & 1- \left\{1-\sum_{k=0}^{T-1}\frac{M\left(b_n^{1}/ b_n^0\right)^{kM}}{k!}\int_0^\infty \exp\left(-x^M\left(1+ \left(b_n^{1}/ b_n^0\right)^M\right)\right)\ x^{kM+M-1}\ dx\right\}\ ,
\end{eqnarray*}
By setting
\begin{equation*}
    s\dff x^M\left(1+ \left(b_n^{1}/ b_n^0\right)^M\right)\ ,
\end{equation*}
we further find
\setcounter{MYtempeqncnt}{\value{equation}}
\setcounter{equation}{15}
\begin{eqnarray}
\label{eq:below1} 
\nonumber \lefteqn{\min_{\{p_i\}\subseteq\Psi(\gamma_n)} P\left(\left.\widehat {\cal U}_{\tiny{\mbox{rob}}}^{\tiny{\mbox{NA}}}\cap\H_1\neq\emptyset \right| \{Z_i\}\right)}\\
\nonumber &\doteq & 1- \left\{1-\sum_{k=0}^{T-1}\frac{M\left(b_n^{1}/ b_n^0\right)^{kM}}{k!}\cdot\frac{1}{M\left(1+\left(b_n^{1}/ b_n^0\right)^{M}\right)^{k+1}}\underset{=\ \Gamma(k+1)\ =\ k!}{\underbrace{\int_0^\infty \exp(-s)\  s^k\ ds}}\right\}\\
  & = & 1- \left\{1-\sum_{k=0}^{T-1}\frac{\left((1+\gamma_n)^M\cdot\frac{n_0}{n_1}\right)^{k}} {\left(1+(1+\gamma_n)^M\cdot\frac{n_0}{n_1}\right)^{k+1}}\right\} \ .
\end{eqnarray}
\setcounter{equation}{\value{MYtempeqncnt}}
\end{figure*}

For the setting of Section \ref{sec:problem_formulation}, the following theorem characterizes the asymptotic performance of the robust hole detector in the $\Psi(\gamma_n)$ class of active users. It also establishes the tradeoffs among the spectrum vacancy likelihood $\epsilon_n$, per channel sampling budget $M$, and the minimum power of active users.

\begin{theorem}[Non-Adaptive Tradeoff]
\label{thm:NA_bound}
For the $\Psi(\gamma_n)$ class of active users, when $\epsilon_n=o(1)$ and $n\epsilon_n=\omega(1)$, the error probability of the robust detector for identifying $T$ spectrum holes is given by
\begin{eqnarray}
\label{eq:P_NA}
\nonumber P_{\mbox{\tiny NA}}(n) &=& \min_{\{p_i\}\subseteq\Psi(\gamma_n)} P\left(\widehat {\cal U}_{\tiny{\mbox{rob}}}^{\tiny{\mbox{NA}}}\cap\H_1\neq\emptyset\right)\\
& \doteq & 1- \left(1+[(1+\gamma_n)^M\cdot\epsilon_n]^{-1}\right)^{-T}\ .
\end{eqnarray}
\end{theorem}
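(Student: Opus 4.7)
The plan is to condition on the occupancy pattern $\{Z_i\}$, which partitions the channels into $n_0 \dff |\H_0|$ holes and $n_1 \dff |\H_1|$ occupied channels, and to work out the error probability in two stages: first conditionally on the sizes $(n_0,n_1)$, then by averaging over the Bernoulli occupancies. Once we fix $p_i=\gamma_n$ for $i\in\H_1$ (the worst-case, as argued before the theorem statement), the two groups of $Y_i$'s are mutually independent, with $\{Y_i\}_{i\in\H_0}$ i.i.d.\ $\dGamma(M,1)$ and $\{Y_i\}_{i\in\H_1}$ i.i.d.\ $\dGamma(M,1+\gamma_n)$. The error event rewrites as $\{Y_{v_T}\leq \min_{i\in\H_1}Y_i\}$, which after the rescaling of Lemma~\ref{lmm:extreme} becomes $\{b_n^0 W^n_{(T);0}\leq b_n^1 W^n_{(1);1}\}$ with $b_n^0=[\Gamma(M+1)/n_0]^{1/M}$ and $b_n^1=(1+\gamma_n)[\Gamma(M+1)/n_1]^{1/M}$.

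Next I would apply Lemma~\ref{lmm:extreme} to each group separately, replacing the density $q_{(1);1}(x;n_1)$ and the CDF $Q_{(T);0}(y;n_0)$ by their asymptotic limits $Mx^{M-1}e^{-x^M}$ and $1-e^{-y^M}\sum_{k=0}^{T-1}y^{kM}/k!$. The resulting integral (displayed in the excerpt) decomposes into $T$ terms of the form $\int_0^\infty e^{-x^M(1+\rho^M)}x^{(k+1)M-1}dx$ with $\rho\dff b_n^1/b_n^0$. Each of these integrals is converted to a Gamma function by the substitution $s=x^M(1+\rho^M)$, producing $k!/[M(1+\rho^M)^{k+1}]$. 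Collecting terms and noting that $\rho^M=(1+\gamma_n)^M\cdot n_0/n_1$ gives, with $r\dff(1+\gamma_n)^M n_0/n_1$, the conditional worst-case error probability
\begin{equation*}
\sum_{k=0}^{T-1}\frac{r^k}{(1+r)^{k+1}}\;=\;\frac{1}{1+r}\cdot\frac{1-(r/(1+r))^T}{1-r/(1+r)}\;=\;1-\Bigl(1+\tfrac{1}{r}\Bigr)^{-T},
\end{equation*}
where I used the closed form of a finite geometric series.

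The final step is to remove the conditioning on $\{Z_i\}$. By the assumption $n\epsilon_n=\omega(1)$ with $\epsilon_n=o(1)$, Chebyshev's inequality applied to the binomial $n_0\sim\mbox{Binomial}(n,\epsilon_n)$ yields $n_0/(n\epsilon_n)\to 1$ and $n_1/n\to 1$ in probability, so $1/r\doteq[(1+\gamma_n)^M\epsilon_n]^{-1}$ on a high-probability set. Since the map $r\mapsto 1-(1+1/r)^{-T}$ is bounded in $[0,1]$, dominated convergence shows that the unconditional probability satisfies the same asymptotic equality, yielding the claimed $P_{\mbox{\tiny NA}}(n)\doteq 1-(1+[(1+\gamma_n)^M\epsilon_n]^{-1})^{-T}$.

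The main obstacle is the joint asymptotic argument in the second step: Lemma~\ref{lmm:extreme} delivers convergence in distribution of $W^n_{(T);0}$ and $W^n_{(1);1}$, but the integrand we are integrating also depends on $n$ through the ratio $\rho=b_n^1/b_n^0$, and the integration itself is over an unbounded domain. One must argue that $\rho$ remains bounded away from $0$ and $\infty$ on the typical event (true because $\gamma_n$ can be tracked and $n_0/n_1$ concentrates) and that the Gamma-type densities in Lemma~\ref{lmm:extreme} admit integrable envelopes uniformly in $n$, so that the exchange of the limit and the integral is legitimate via a uniform-integrability or dominated-convergence argument rather than merely pointwise convergence of CDFs.
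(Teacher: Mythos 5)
Your proposal is correct and follows essentially the same route as the paper's proof: condition on $\{Z_i\}$, set $p_i=\gamma_n$, apply Lemma~\ref{lmm:extreme} to the two Gamma populations, evaluate the resulting integral via the substitution $s=x^M(1+\rho^M)$, sum the geometric series, and then decondition using the law of large numbers for $n_0/n_1$. Your closing remark about justifying the exchange of the limit with the integral (since Lemma~\ref{lmm:extreme} only gives convergence in distribution while the integrand depends on $n$ through $\rho$) is a fair point of rigor that the paper's own argument passes over with an appeal to continuous mapping; it does not change the approach.
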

\begin{proof}
For assessing the performance of the {\em robust} detection scheme, we set $p_i=\gamma_n$ for all $i\in\H_1$, as this corresponds to the worst case scenario. The number of holes denoted by $|\H_0|$, is a random quantity. We proceed by conditioning on $\{Z_i\}_{i=1}^n$ in what follows, but for notational convenience we do not explicitly represent this dependence. Define $n_0\dff |\H_0|$ and $n_1\dff|\H_1|=n-n_0$ (conditionally on $\{Z_i\}$ these are constants).

We are now ready to apply Lemma~\ref{lmm:extreme} on the two sequences of i.i.d. random variables $\{Y_i: i\in\H_0\}$ distributed as $\dGamma(M,1)$ and $\{Y_i : i\in\H_1\}$ distributed as $\dGamma(M,1+\gamma_n)$. Corresponding to these sequences define
\begin{equation}\label{eq:b0}
    b_n^0\dff\left[\frac{\Gamma(M+1)}{n_0}\right]^{1/M}\ ,
\end{equation}
and
\begin{equation}\label{eq:b1}
 b_n^{1}\dff(1+\gamma_n)\left[\frac{\Gamma(M+1)}{n_{1}}\right]^{1/M} \ .
\end{equation}
Also define
\begin{equation}\label{eq:W}
    W^n_{i;0}\dff\frac{1}{b_n^0}\ Y_i, \quad i\in \H_0 ,\quad\mbox{and}\quad W^n_{i;1}\dff\frac{1}{b_n^1} Y_i\ ,\quad i\in\H_1 \ .
\end{equation}
For convenience denote the probability density functions (PDF) of $W^n_{i;0}$ by $q_{i;0}(w;n_1)\dff\frac{d}{dw} Q_{i;0}(w;n_1)$ and PDF of $W^n_{i;1}$ by $q_{i;1}(w;n_1)\dff\frac{d}{dw} Q_{i;1}(w;n_1)$ (we drop the explicit dependence on $n$ to avoid notational clutter). Note that $\{W^n_{i;0}\}$ and $\{W_{i;1}^n\}$ are statistically independent. In what follows we are going to be interested in the order statistics of these sequences, in particular $W^n_{(i);0}$, $i\in\{1,\ldots,T\}$, the smallest $T$ elements of $\{W^n_{i;0}\}_{i\in\H_0}$ sorted in increasing magnitude, and $W^n_{(1);1}=\min_{i\in\H_1} W^n_{(i);1}$. Taking this into account and noting that, by the law of large numbers, the assumptions $\epsilon_n=o(1)$ and $n\epsilon_n=\omega(1)$ imply that $n_0,n_1\rightarrow\infty$ as $n\rightarrow\infty$, we have according to Lemma~\ref{lmm:extreme}
\begin{align*}
    Q_{(1);1}(w;n_1)&\doteq1-\exp(-w^M)
 \end{align*}
 \begin{align*}
Q_{(i);0}(w;n_0)&= P(W_{(i);0}^m<w)\\
&\doteq 1-\exp(-w^M)\sum_{k=0}^{i-1}\frac{w^{kM}}{k!}\ .
\end{align*}
As a result, by invoking \eqref{eq:P_NA2} when $n_0$ and $n_1$ approach infinity
\begin{equation*}
\min_{\{p_i\}\subseteq\Psi(\gamma_n)} P\left(\left.\widehat {\cal U}_{\tiny{\mbox{rob}}}^{\tiny{\mbox{NA}}}\cap\H_1\neq\emptyset \right| \{Z_i\}\right)
\end{equation*}
is characterized in \eqref{eq:below1}. 
We now remove the conditioning on $\{Z_i\}$, by computing the expectation of the expression above. Recall that $n_0/n_1$ is a random quantity, and by assumption and the law of large numbers $n_0/n_1 \rightarrow\epsilon_n$. Using this fact and continuous mapping principles we conclude that
\begin{align*}
\lefteqn{\min_{\{p_i\}\subseteq\Psi(\gamma_n)} P\left(\left.\widehat {\cal U}_{\tiny{\mbox{rob}}}^{\tiny{\mbox{NA}}}\cap\H_1\neq\emptyset \right| \{Z_i\}\right)}\\
& \doteq  1- \left\{1-\sum_{k=0}^{T-1}\frac{\left((1+\gamma_n)^M\cdot\epsilon_n\right)^{k}} {\left(1+(1+\gamma_n)^M\cdot\epsilon_n\right)^{k+1}}\right\} \\
& =  1- \left\{1-\frac{1}{1+(1+\gamma_n)^M\cdot\epsilon_n}\cdot \frac{1-\left(\frac{(1+\gamma_n)^M\cdot\epsilon_n} {1+(1+\gamma_n)^M\cdot\epsilon_n}\right)^T} {1-\left(\frac{(1+\gamma_n)^M\cdot\epsilon_n} {1+(1+\gamma_n)^M\cdot\epsilon_n}\right)}\right\}\\
& =  1- \left(1+[(1+\gamma_n)^M\cdot\epsilon_n]^{-1}\right)^{-T}\ ,
\end{align*}
as desired.
\end{proof}

As expected, there exists a tension between reliability and agility. On one hand, increasing the sampling budget per channel $M$ favors reliability, as according to \eqref{eq:P_NA} it improves the probability of successfully detecting a hole. Increasing the sampling budget, on the other hand, imposes more delay in spectrum sensing as the time required for completing the acquisition of data scales linearly with the sampling budget. Finally, \eqref{eq:P_NA} also states that an increase in the minimum power of the active users $\gamma_n$, improves the reliability. Theorem~\ref{thm:NA_bound} characterizes the tradeoff among reliability, hole detection agility, and the power of active users. By using the result of Theorem~\ref{thm:NA_bound}, we offer a necessary and sufficient condition on the scaling of the power of the active users to guarantee asymptotically error-free multiple hole detection in the non-adaptive sensing setting.
\begin{corollary}[Non-Adaptive Power Scaling]
\label{cor:power_NA} For the $\Psi(\gamma_n)$ class of active users, when $\epsilon_n=o(1)$ and $n\epsilon_n=\omega(1)$, the necessary and sufficient condition for $P_{\mbox{\tiny NA}}(n)\rightarrow 0$ as $n\rightarrow \infty$ is that $\gamma_n$ scales with increasing $n$ as
\setcounter{equation}{16}
\begin{equation}
\gamma_n= \omega\left(\sqrt[M]{\frac{1}{\epsilon_n}}\right)\ .
\end{equation}
\end{corollary}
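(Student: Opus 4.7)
The plan is to extract the desired scaling condition directly from the closed-form asymptotic expression for $P_{\mbox{\tiny NA}}(n)$ given in Theorem~\ref{thm:NA_bound}, and then reconcile the condition obtained in terms of $1+\gamma_n$ with the stated condition on $\gamma_n$ alone.

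First, I would rewrite the error probability in a more convenient form. Setting $\xi_n\dff(1+\gamma_n)^M\epsilon_n$, Theorem~\ref{thm:NA_bound} gives
\begin{equation*}
P_{\mbox{\tiny NA}}(n)\doteq 1-\left(1+\xi_n^{-1}\right)^{-T}\ .
\end{equation*}
Since $T$ is a fixed positive integer and the map $x\mapsto(1+x^{-1})^{-T}$ is continuous on $(0,\infty)$, monotone increasing in $x$, and satisfies $(1+x^{-1})^{-T}\to 1$ as $x\to\infty$ and $(1+x^{-1})^{-T}\to 0$ as $x\to 0^+$, I conclude that $P_{\mbox{\tiny NA}}(n)\to 0$ if and only if $\xi_n\to\infty$, i.e., if and only if $(1+\gamma_n)^M\epsilon_n=\omega(1)$.

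Next, I would rearrange $(1+\gamma_n)^M\epsilon_n=\omega(1)$ into the claimed form. Taking $M$-th roots (which preserves the $\omega$ relation since $x\mapsto x^{1/M}$ is monotone and continuous on $(0,\infty)$), this is equivalent to $1+\gamma_n=\omega\bigl(\sqrt[M]{1/\epsilon_n}\bigr)$.

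The last step is to observe that under the standing assumption $\epsilon_n=o(1)$, the denominator sequence $\sqrt[M]{1/\epsilon_n}$ diverges to infinity, so $1+\gamma_n=\omega\bigl(\sqrt[M]{1/\epsilon_n}\bigr)$ forces $\gamma_n\to\infty$, and in this regime $1+\gamma_n$ and $\gamma_n$ are asymptotically equivalent. Consequently $1+\gamma_n=\omega\bigl(\sqrt[M]{1/\epsilon_n}\bigr)$ is equivalent to $\gamma_n=\omega\bigl(\sqrt[M]{1/\epsilon_n}\bigr)$, completing both the necessity and sufficiency parts simultaneously.

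The main obstacle is only a notational/bookkeeping one: one must be careful to use the asymptotic equality $\doteq$ in Theorem~\ref{thm:NA_bound} correctly when deducing convergence of $P_{\mbox{\tiny NA}}(n)$, and verify that swapping $1+\gamma_n$ for $\gamma_n$ inside the $\omega(\cdot)$ is legitimate, which relies crucially on $\epsilon_n=o(1)$. No heavy calculation is needed; everything follows from monotonicity and continuity arguments applied to the expression provided by Theorem~\ref{thm:NA_bound}.
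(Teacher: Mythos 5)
Your proposal is correct and follows essentially the same route as the paper, which likewise reads off the equivalence $P_{\mbox{\tiny NA}}(n)\rightarrow 0 \Leftrightarrow (1+\gamma_n)^M\epsilon_n\rightarrow\infty$ directly from \eqref{eq:P_NA}. The only difference is that you spell out the (valid) step of replacing $1+\gamma_n$ by $\gamma_n$ inside the $\omega(\cdot)$, which the paper leaves implicit.
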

In other words, if the power of the ``faintest'' active user grows faster than $\sqrt[M]{\frac{1}{\epsilon_n}}\;$, then a secondary user can reliably identify $T$ holes by employing this non-adaptive procedure. The proof follows in a straightforward way from the characterization of $P_{\mbox{\tiny NA}}(n)$ given in \eqref{eq:P_NA} by noting that $P_{\mbox{\tiny NA}}(n)\rightarrow 0$ is equivalent to $(1+\gamma_n)^M\cdot\epsilon_n\rightarrow\infty$.

\section{Adaptive Spectrum Detection}
\label{sec:adaptive}

\subsection{Adaptive Sensing Procedure}

Our proposed adaptive spectrum sensing procedure has two phases, namely the {\em exploration} phase and the {\em detection} phase. The exploration phase, being an iterative procedure, is intended to purify the set of the channels to be sensed carefully for detecting  spectrum holes. This phase is accomplished by successively identifying and eliminating a group of channels deemed to be occupied. The detection phase is performed after the exploration phase in order to identify $T$ holes among the subset of candidate channel retained after exploration. The detection scheme deployed is identical to the robust spectrum detection scheme of Section~\ref{sec:non_adaptive}.

The exploration phase proceeds in an iterative way. In each iteration it further monitors the channels retained by the previous iteration and eliminates those deemed to be spectrum holes least-likely. The core idea is that it is relatively easy to identify occupied channels with low-quality measurements since there are few holes available (recall that $\epsilon_n$ is small). Each iteration carries on by thresholding the observed energy on each channel retained by the previous iteration. The threshold level depends only on $\gamma_n$, and is designed such that at each iteration roughly half of the existing occupied channels are eliminated, while almost all the of spectrum holes are preserved. The output of each exploration phase will have a more condensed proportion of spectrum holes to occupied channels. Subsequently, the detector developed for the non-adaptive procedure is applied on this refined set of channels in order to identify $T$ spectrum holes. This entire procedure bears similarities with Distilled Sensing~\cite{haupt:10_submitted, haupt:08, haupt:09}, however, the analysis is substantially different. This is due to the different sensing objective (identifying any arbitrary number of holes as opposed to \cite{haupt:10_submitted, haupt:08, haupt:09} that aims to identify almost all) as well as the underlying statistical model.

We show that the gains yielded by this adaptive procedure can be interpreted in two ways. First we demonstrate that when targeting the same level of reliability in hole detection, the adaptive procedure requires substantially less sampling budget, or equivalently it is substantially more agile. Secondly, we show that under the same sampling budget, and targeting identical hole detection reliability, the adaptive procedure imposes less-stringent conditions on how fast the power of the active users $\gamma_n$ must scale with increasing $n$. This essentially indicates that for some choices of $\gamma_n$ the adaptive procedure can guarantee successful hole detection while the best non-adaptive procedure fails to do so.

Let us define $K$ as the number of exploration cycles (iterations) in the exploration phase. Also denote the sampling budget per channel in the $k^{th}$ exploration cycle by $M_k$. The exploration phase is initialized by including {\em all} channels for sensing and resumes as follows. In the first iteration all channels are allocated the identical sampling budget of $M_1$. The energy levels of all channels are compared against $\lambda_1(1+\gamma_n)$, where $\lambda_1$ is the median of the distribution $\dGamma(M_1,1)$. The channels for which the measured energy exceed this threshold are discarded and the rest are carried over to the second iteration for further sensing. The same procedure is repeated throughout all $K$ cycles. More specifically, in the $k^{th}$ cycle all the channels retained by the $(k-1)^{th}$ iteration are allocated the identical sampling budget of $M_k$. The energy levels of these channels are compared with $\lambda_k(1+\gamma_n)$, where $\lambda_k$ is the median of the distribution $\dGamma(M_k,1)$ and the exploration is performed via thresholding as in the first iteration. Finally, after the exploration phase, each of the remaining channels is allocated the sampling budget of $M_{K+1}$ and the robust spectrum detection scheme provided in Remark \ref{remark:MAP} is applied in order to detect $T$ holes.

We set $\G_0\dff\{1,\dots,n\}$ and for $k=1,\dots,K$, we define $\G_k$ as the set of the indices of the channels that are retained by the $k^{th}$ exploration cycle. Clearly we have $\G_{K}\subseteq \dots, \subseteq\G_1\subseteq \G_0$ and $\G_{K}$ contains the set of the indices of the candidate channels among which $T$ holes will be detected. The set of measurements defined for the non-adaptive scheme in (\ref{eq:D}) is extended for the proposed adaptive procedure as follows. We define the set of measurements as
\begin{equation*}
    \D_n^k\dff\big\{\bX^k_i:\bX^k_i=\sqrt{p_i}\;Z_i\ \bbH^k_i\circ \bS^k_i\cdot +\bW^k_i \quad\mbox{for}\quad i\in\G_{k-1}\big\}\ ,
\end{equation*}
for $ k=1,\dots,K+1$.  The measurement sets $\D_n^1,\dots,\D_n^{K}$ are processed in the exploration phase and the measurement set $\D_n^{K+1}$ is used in the detection phase. Based on the model of channels dynamics described in Section~\ref{sec:problem_formulation}, only the spectral occupancy $\{Z_i\}$ and the power of active users $\{p_i\}$ remain unchanged during all observations, and fading, channel noise and transmitted signal change to independent states after each  observation. Therefore, given the occupancy status $Z_i$, the observation sample $X^k_i(j)$ for $k=1,\dots,K+1$ is distributed as
\setcounter{equation}{17}
\begin{equation}\label{eq:X_dist_adaptive}
    X^k_i(j)\med Z_i\stackrel{\mbox{\tiny i.i.d.}}{\sim}{\cal N}_{\mathbb C}(0,1+p_iZ_i)\ ,
\end{equation}
for $i\in\G_{k-1}$ and $j=1,\dots,M$. We also define
\begin{equation}\label{eq:Y_adaptive}
    Y_i^k\dff\|\bX^k_i\|^2\quad\mbox{for}\quad i\in \G_{k-1}\quad\mbox{and}\quad k=1,\dots,K+1\ .
\end{equation}
Equations \eqref{eq:X_dist_adaptive} and \eqref{eq:Y_adaptive} provide that for a given $Z_i$, $Y^k_i$ is distributed as
\begin{equation}\label{eq:Y_dist_adaptive}
   Y^k_i\med Z_i\sim \dGamma(M_k,1+p_i Z_i)\ ,
\end{equation}
for $i\in\G_{k-1}$ and $k=1,\dots,K+1$.
For each $k=1,\dots, K+1$, corresponding to the sequence $\{Y^k_i\}_{i\in\G_{k-1}}$ we define the sequence of order statistics $\{Y^k_{(i)}\}_{i\in\G_{k-1}}$ in an increasing order such that $Y^k_{(i)}$ represents the $i^{th}$ smallest element of this sequence. The adaptive sensing procedure is formally described in Table 1.

\begin{table*}
\label{fig:algorithm}
\rule{\linewidth}{0.3mm}\vspace{.1in}
\begin{minipage}[h]{6 in}
{\normalsize{{\rm
\begin{tabular}{ll}
   &\hspace{2.5 in}{\it Exploration phase} \vspace{-.1in} \\
   &\rule{\linewidth}{0.1mm}\vspace{.00in}\\
   \;1:& \textbf{Input} $K\in\mathbb{N}$ and $\{M_1,\dots, M_{K+1}\}$ where $M_k\in\{1,2,3,\dots\}$.\\
   \;2:& \textbf{Initialize} the index set $\G_0\leftarrow\{1,\dots,n\}$.\\
   \;3:& \textbf{\bf for} $k=1,\dots,K$ \textbf{\bf do}\\
   \;4:& \quad Set
   $Y_i^k=\left\{\begin{array}{ll}
      \|\bX^k_i\|^2  & \quad\mbox{for}\quad i\in \G_{k-1} \\
      +\infty & \quad\mbox{for}\quad i\notin \G_{k-1}
    \end{array}\right.
   $
   .\\
   \;5:& \quad Obtain $\G_{k}\leftarrow \left\{i\in \G_{k-1}\med
   Y_i^k<\lambda_k(1+\gamma_n)\right\}$ where $\lambda_k $ is the median of ${\rm Gamma}(M_k,1)$.\\
   \;6:& \textbf{\bf end for}\\
   & \rule{\linewidth}{0.1mm}\vspace{.00in}\\
   &\hspace{2.5 in}{\it Detection phase} \vspace{-.1in} \\
   &\rule{\linewidth}{0.1mm}\vspace{.00in}\\
   \;7:&  Set
   $Y_i^{K+1}=\left\{\begin{array}{ll}
      \|\bX^{K+1}_i\|^2  & \quad\mbox{for}\quad i\in \G_{K} \\
      +\infty & \quad\mbox{for}\quad i\notin \G_{K}
    \end{array}\right.
   $
   .\\
   \;8:& Identify the indices of a spectrum holes $\widehat {\cal U}_{\tiny{\mbox{rob}}}^{\tiny{\mbox{A}}}\dff\{i\in\G_{K}: Y^{K+1}_i\leq Y^{K+1}_{(T)}\}$.\\
   \;9:& \textbf{Output} $\widehat {\cal U}_{\tiny{\mbox{rob}}}^{\tiny{\mbox{A}}}$.
\end{tabular}}}
}
\end{minipage}\\
\rule{\linewidth}{0.3mm}
\caption{adaptive robust spectrum detection algorithm}
\end{table*}

\subsection{Asymptotic Performance}
We start by assessing the performance for any given value of the exploration cycles $K$ and relegate the discussions on the optimal design of $K$ to the next section. The analysis of the adaptive sensing procedure follows the approach of \cite{haupt:10_submitted, haupt:08}, albeit with the non-trivial modifications to deal with the different objective and the different observation model. The following lemmas shed light on how the adaptive procedure accomplishes the exploration cycles. Lemma \ref{lmm:hole} characterizes the proportion of the spectrum holes that are retained in each exploration cycle.
\begin{lemma}\label{lmm:hole}
Let $m_0=|\H_0|$ and for $k=1,\dots,K$ define $m_k$ as the number of holes retained by the $k^{th}$ exploration cycle. Conditionally on $m_{k-1}$ for $ k=1,\dots,K$ and for sufficiently large $n$ the event
\begin{equation}\label{eq:m_k}
    \left(\frac{\gamma_n}{1+\gamma_n}\right)m_{k-1}\leq  m_{k} \leq m_{k-1}\ ,
\end{equation}
holds with probability at least $1-\exp\left(-\frac{m_{k-1}}{n^\alpha}\right)$ for any $\alpha>0$.
\end{lemma}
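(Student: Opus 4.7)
The overall approach is to reduce the lemma to a binomial concentration inequality. The upper bound $m_k\leq m_{k-1}$ is immediate because cycle $k$ can only prune channels from $\G_{k-1}$; the entire content of the lemma is therefore the stochastic lower bound.

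The first substantive step is to identify the conditional law of $m_k$ given $m_{k-1}$. By \eqref{eq:Y_dist_adaptive} the cycle-$k$ measurements $\{Y_i^k\}_{i\in\G_{k-1}}$ are independent across channels and, on a hole ($Z_i=0$), distributed as $\dGamma(M_k,1)$. A hole is retained iff $Y_i^k<\lambda_k(1+\gamma_n)$, so conditionally on $m_{k-1}$ one has $m_k\sim\mathrm{Binomial}(m_{k-1},q_k)$ with per-hole retention probability
\[
q_k\;\dff\;P\bigl(Y\leq\lambda_k(1+\gamma_n)\bigr),\qquad Y\sim\dGamma(M_k,1).
\]

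The second step is to lower-bound $q_k$ by $\gamma_n/(1+\gamma_n)$. By construction $\lambda_k$ is the median of $\dGamma(M_k,1)$, hence $q_k>1/2$, which already dominates $\gamma_n/(1+\gamma_n)$ whenever $\gamma_n\leq 1$. For the complementary (and more relevant) regime $\gamma_n>1$, I would estimate the upper tail $1-q_k=P(Y>\lambda_k(1+\gamma_n))$ via a Chernoff bound on $\dGamma(M_k,1)$ combined with a standard lower bound on the Gamma median (such as $\lambda_k\geq M_k-\ln 2$), obtaining $1-q_k\leq 1/(1+\gamma_n)$ asymptotically. This yields a non-negative gap $\eta_n\dff q_k-\gamma_n/(1+\gamma_n)$ that decays at worst polynomially in $n$.

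The final step is a Hoeffding bound on the binomial tail,
\[
P\bigl(m_k<(q_k-\epsilon)\,m_{k-1}\bigm| m_{k-1}\bigr)\;\leq\;\exp\bigl(-2\epsilon^{2}m_{k-1}\bigr),
\]
applied with $\epsilon=\eta_n$, which converts the desired event $m_k\geq\frac{\gamma_n}{1+\gamma_n}\,m_{k-1}$ into a failure probability $\exp(-2\eta_n^{2}m_{k-1})$. Since $\eta_n$ is polynomially bounded from below, for every fixed $\alpha>0$ and all sufficiently large $n$ one has $2\eta_n^{2}\geq 1/n^{\alpha}$, matching the probability bound in the statement. The hard part is the second step: the trivial median-based bound $q_k\geq 1/2$ is insufficient in the large-$\gamma_n$ regime, so one has to exploit the specific Gamma upper tail just above its median, uniformly in $M_k$; once $q_k$ has been pinned down, the rest is routine binomial bookkeeping.
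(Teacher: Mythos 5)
Your reduction to a binomial model is exactly the paper's: conditionally on $m_{k-1}$ the retained holes form a $\mathrm{Binomial}(m_{k-1},a_k)$ count with $a_k=P\big(Y<\lambda_k(1+\gamma_n)\big)$ for $Y\sim\dGamma(M_k,1)$, and the upper bound $m_k\leq m_{k-1}$ is deterministic. The divergence --- and the gap --- is in how you control the lower tail. Hoeffding's inequality gives a failure probability $\exp(-2\eta_n^2 m_{k-1})$ with $\eta_n=a_k-\frac{\gamma_n}{1+\gamma_n}\leq 1-\frac{\gamma_n}{1+\gamma_n}=\frac{1}{1+\gamma_n}$, i.e.\ at best $\exp\!\big(-2m_{k-1}/(1+\gamma_n)^2\big)$. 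The lemma claims $\exp(-m_{k-1}/n^\alpha)$ for \emph{every} $\alpha>0$, which is an exponent of order $m_{k-1}\,n^{-o(1)}$. In the regime the paper cares about, $\gamma_n$ grows polynomially in $n$ (the power-scaling conditions force $\gamma_n=\omega(\sqrt[M]{1/\epsilon_n})$, and the simulations use $\gamma_n=n^{1/5}$), so $2\eta_n^2\leq 2(1+\gamma_n)^{-2}$ is polynomially small and your assertion that $2\eta_n^2\geq n^{-\alpha}$ for every fixed $\alpha$ is false: it fails for all $\alpha$ below twice the growth exponent of $\gamma_n$. Hoeffding cannot deliver the stated rate.

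There is a second, compounding weakness: your Step 2 only establishes $1-a_k\leq \frac{1}{1+\gamma_n}$, whereas the proof needs the much stronger fact that a hole is eliminated with probability \emph{exponentially} small in $\gamma_n$, namely $1-a_k\leq \exp\!\big(-\lambda_k(1+\gamma_n)\big)$ (the paper gets this from a crude Gamma tail bound; no median estimate is needed since $\lambda_k(1+\gamma_n)\gg\lambda_k$). The paper then applies the multiplicative (relative-entropy) Chernoff bound $P(B\leq b)\leq\big(\tfrac{m-ma}{m-b}\big)^{m-b}\big(\tfrac{ma}{b}\big)^{b}$ with $b=\tfrac{\gamma_n}{1+\gamma_n}m_{k-1}$: the bad event requires a fraction $\tfrac{1}{1+\gamma_n}$ of the holes to be eliminated, each with probability at most $e^{-\lambda_k(1+\gamma_n)}$, and the factor $(1-a_k)^{m_{k-1}/(1+\gamma_n)}\leq e^{-\lambda_k m_{k-1}}$ makes the two scales of $(1+\gamma_n)$ cancel, yielding an exponent of roughly $-\lambda_k m_{k-1}$, a constant times $m_{k-1}$, which dominates $m_{k-1}/n^{\alpha}$ for every $\alpha>0$. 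Note that even a Bernstein bound combined with your estimate $1-a_k\leq\frac{1}{1+\gamma_n}$ would only give an exponent of order $m_{k-1}/(1+\gamma_n)$, still short of the claim; you need both the exponentially small per-hole elimination probability and a multiplicative tail bound.
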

\begin{proof}
See Appendix \ref{app:lmm:hole}.
\end{proof}
The next lemma shows that during each exploration cycle almost half of the non-holes are eliminated.
\begin{lemma}\label{lmm:occupied}
Let $\ell_0=|\H_1|$ and for $k=1,\dots,K$ define $\ell_k$ as the number of the occupied channels retained by the $k^{th}$ exploration cycle. Conditionally on $\ell_{k-1}$ for $ k=1,\dots,K$ and for sufficiently large $n$ the event
\begin{equation}\label{eq:l_k}
    \left(c_k-\frac{1}{\log n}\right)\ell_{k-1} \leq  \ell_{k}\leq  \left(c_k+\frac{1}{\log n}\right)\ell_{k-1} \ ,
\end{equation}
holds with probability at least $1-2\exp\left(-\frac{2\ell_{k-1}}{(\log n)^2}\right)$, where $c_k\leq\frac 1 2$ is a constant. Furthermore, $c_k=\frac{1}{2}$ if and only if $p_i=\gamma_n$ for $i\in\H_1$.
\end{lemma}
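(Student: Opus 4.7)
The plan is to express $\ell_k$ as a sum of independent Bernoulli indicators, control each indicator's success probability via the defining property of the Gamma median, and then invoke Hoeffding's inequality to obtain the stated concentration around the mean.

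First I would analyze the single-channel retention probability. For an occupied channel $i \in \H_1$ surviving at the start of cycle $k$, $Y_i^k \sim \dGamma(M_k, 1+p_i)$, so $Y_i^k/(1+p_i) \sim \dGamma(M_k,1)$. The retention event is $\{Y_i^k < \lambda_k(1+\gamma_n)\}$, whose probability is
\begin{equation*}
q_i \dff P\!\left(\dGamma(M_k,1) < \frac{\lambda_k(1+\gamma_n)}{1+p_i}\right).
\end{equation*}
Since $p_i \geq \gamma_n$ we have $(1+\gamma_n)/(1+p_i) \leq 1$ with equality iff $p_i=\gamma_n$, and $\lambda_k$ is the median of $\dGamma(M_k,1)$, so $q_i \leq 1/2$, with equality iff $p_i=\gamma_n$. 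This already yields the second sentence of the lemma regarding $c_k=1/2$.

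Next, conditionally on $\ell_{k-1}$ and on which channels survived into cycle $k$, the indicators $\{\mathbf{1}(i\in\G_k)\}_{i\in\G_{k-1}\cap\H_1}$ are independent Bernoulli$(q_i)$ random variables since the measurements in cycle $k$ are independent of past cycles. Thus $\ell_k$ is a sum of $\ell_{k-1}$ independent Bernoullis and has conditional mean $\bar q\cdot\ell_{k-1}$ where $\bar q \dff \frac{1}{\ell_{k-1}}\sum_i q_i \leq 1/2$. I would define $c_k \dff \bar q$ (a random quantity in general, bounded by $1/2$).

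Now I would apply Hoeffding's inequality to conclude that for any $t>0$,
\begin{equation*}
P\bigl(|\ell_k - c_k\ell_{k-1}| > t \,\big|\, \ell_{k-1}\bigr) \leq 2\exp\!\left(-\frac{2t^2}{\ell_{k-1}}\right).
\end{equation*}
Setting $t = \ell_{k-1}/\log n$ gives the two-sided bound $(c_k - 1/\log n)\ell_{k-1} \leq \ell_k \leq (c_k + 1/\log n)\ell_{k-1}$ with the claimed failure probability $2\exp(-2\ell_{k-1}/(\log n)^2)$, which matches \eqref{eq:l_k}.

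The only subtle point is handling the dependence structure cleanly: $c_k$ may depend on which particular occupied channels survived earlier cycles, but all we need is that conditionally on the identities of the surviving occupied channels, the cycle-$k$ measurements are fresh and independent, so Hoeffding still applies with the (random) conditional mean $c_k\ell_{k-1}$. The qualifier ``for sufficiently large $n$'' is used only to ensure $1/\log n$ is small enough that the interval in \eqref{eq:l_k} is nontrivial; the Hoeffding bound itself is non-asymptotic.
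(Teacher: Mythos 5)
Your proposal is correct and follows essentially the same route as the paper's proof: decompose $\ell_k$ into independent Bernoulli indicators, bound each retention probability by $1/2$ via the median of $\dGamma(M_k,1)$ together with $p_i\geq\gamma_n$ (with equality iff $p_i=\gamma_n$), set $c_k$ equal to the normalized conditional mean, and apply Hoeffding's inequality with $t=\ell_{k-1}/\log n$. Your explicit remark that $c_k$ is in general a (conditionally determined) quantity rather than a universal constant is a fair observation that the paper glosses over, but it does not change the argument.
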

\begin{proof}
See Appendix \ref{app:lmm:occupied}.
\end{proof}
A careful use of the above lemmas establishes the performance of the adaptive robust spectrum detection in the following theorem.
\begin{theorem}[Adaptive Tradeoff]
\label{thm:adaptive}
For the $\Psi(\gamma_n)$ class of active users, when $\epsilon_n=o(1)$ and $n\epsilon_n=\omega(1)$, the error probability of the adaptive robust spectrum detection procedure for identifying $T$ spectrum holes is given by
\begin{eqnarray}
\label{eq:P_adaptive}
\nonumber P_{\mbox{\tiny A}}(n) & \dff & \min_{\{p_i\}\subseteq\Psi(\gamma_n)} P\left(\widehat {\cal U}_{\tiny{\mbox{rob}}}^{\tiny{\mbox{A}}}\cap\H_1\neq\emptyset\right)\\
& \doteq & 1- \left(1+[(1+\gamma_n)^{M_{K+1}}\cdot2^K\epsilon_n]^{-1}\right)^{-T}\ .
\end{eqnarray}
\end{theorem}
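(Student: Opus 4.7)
The plan is to reduce the analysis of the adaptive procedure to a single application of Theorem~\ref{thm:NA_bound} applied to the set $\G_K$ of channels surviving the exploration phase, but with an \emph{effective} hole-density substantially larger than the original $\epsilon_n$. The key observation is that if the exploration phase approximately halves the population of occupied channels in every cycle while preserving essentially all holes, then after $K$ cycles the ratio of holes to surviving channels is amplified by roughly $2^K$, which is precisely the factor appearing in the statement.

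First I would apply Lemma~\ref{lmm:hole} iteratively across the $K$ cycles together with a union bound. Since asymptotically successful detection requires $\gamma_n\to\infty$ (otherwise even the non-adaptive procedure fails), the factor $\gamma_n/(1+\gamma_n)$ tends to $1$, so the lower bound on $m_K$ becomes $m_K\geq m_0\cdot[\gamma_n/(1+\gamma_n)]^K \doteq m_0 \doteq n\epsilon_n$; the failure probability is at most $\sum_{k=1}^K \exp(-m_{k-1}/n^\alpha)$ which, under the assumption $n\epsilon_n=\omega(1)$ and for suitably small $\alpha$, is negligible relative to any polynomially decaying target. Next I would apply Lemma~\ref{lmm:occupied} iteratively. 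Worst-case occupancy has $p_i=\gamma_n$, so $c_k=1/2$, yielding $\ell_K \doteq \ell_0/2^K \doteq (n-n\epsilon_n)/2^K \doteq n/2^K$ (using $\epsilon_n=o(1)$), again with a union-bounded failure probability that is negligible.

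Second, I would condition on the high-probability event that both chains of inequalities hold for all $k=1,\ldots,K$. On this event, the surviving set $\G_K$ contains $m_K$ holes and $\ell_K$ occupied channels, with an effective hole-density
\begin{equation*}
\epsilon_n' \;\dff\; \frac{m_K}{m_K+\ell_K} \;\doteq\; \frac{n\epsilon_n}{n\epsilon_n + n/2^K} \;\doteq\; 2^K\epsilon_n,
\end{equation*}
provided $2^K\epsilon_n=o(1)$ (which is the regime where the problem remains non-trivial). Since fading, noise, and symbols are independent across measurement rounds, the measurement vectors $\{\bX_i^{K+1}\}_{i\in\G_K}$ used in the detection phase are statistically independent of $\G_K$ itself given the occupancy labels, so I can legitimately invoke Theorem~\ref{thm:NA_bound} on $\G_K$ with per-channel budget $M_{K+1}$ and sparsity $\epsilon_n'$. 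Substituting $\epsilon_n \mapsto 2^K\epsilon_n$ and $M \mapsto M_{K+1}$ into \eqref{eq:P_NA} yields the claimed expression.

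The main obstacle will be bookkeeping the conditional dependencies cleanly: the surviving counts $m_k$ and $\ell_k$ are random, so the lemmas must be applied conditionally on the filtration generated by earlier cycles, and the error events must be aggregated via union bound without blowing up the leading-order term. A subtle point is verifying that the $\doteq$ relations propagate correctly through $K$ multiplicative steps, which requires $K$ to be fixed (or at most grow slowly enough that $[\gamma_n/(1+\gamma_n)]^K\to 1$ and that $K$ union-bound terms remain negligible). This is exactly the regime where the adaptive gain is meaningful and matches the qualitative statement in the introduction that $M'$ (here $M_{K+1}$) is an integer ``substantially larger than $M$'' obtained by repurposing budget originally spent uniformly.
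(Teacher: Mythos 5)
Your proposal is correct and follows essentially the same route as the paper: iterate Lemmas~\ref{lmm:hole} and~\ref{lmm:occupied} across the $K$ cycles (the paper aggregates the per-cycle success probabilities as a product of conditional probabilities $\zeta_k$ rather than a union bound, which is equivalent here), condition on the event $m_K \doteq n\epsilon_n$ and $\ell_K \doteq n/2^K$, and then rerun the Theorem~\ref{thm:NA_bound} / Lemma~\ref{lmm:extreme} analysis on $\G_K$ with per-channel budget $M_{K+1}$ and effective hole-to-occupied ratio $2^K\epsilon_n$. Your explicit remarks on the independence of the detection-phase measurements from $\G_K$ given the occupancy labels and on the requirement that $K$ be fixed so the $\doteq$ relations propagate are points the paper treats only implicitly, and they are correctly handled.
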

\begin{proof}
See Appendix \ref{app:thm:adaptive}.
\end{proof}
The tradeoff above suggests that the \emph{asymptotic} performance of the adaptive procedure does not impose any constraint on the choice of $M_k$ for $k=1,\dots,K$, other than the trivial constraint $M_k\geq 1$. Hence, for achieving the best \emph{asymptotic} performance the best sequential experimental design requires maximizing $M_{K+1}$, i.e., the sampling budget per channel in the detection phase. More specifically, given a fixed sampling budget, i.e., a ``cap'' on the maximum number of measurements available, the best asymptotic strategy is to allocate as much as possible sampling budget for the detection phase and to allocate as low as {\em one} sample per channel in each exploration cycle. This implies that as low as one sample per channel in each exploration cycle suffices to ensure that for sufficiently large $n$, the exploration phase retains almost all the holes and discards almost half of the occupied channels. Therefore, to achieve the best asymptotic behavior we should set $M_k=1$ for $k=1,\dots,K$.

In order to quantify the gains yielded by the adaptive algorithm, we compare the results for the non-adaptive and adaptive schemes provided in Theorems~\ref{thm:NA_bound} and \ref{thm:adaptive}. In particular, with the aim of attaining identical asymptotic reliability levels in the adaptive and the non-adaptive procedures, i.e., $P_{\mbox{\tiny A}}(n)\doteq P_{\mbox{\tiny NA}}(n)$, we characterize the \emph{agility gain}, which we define as the ratio of the sampling budgets required by the non-adaptive procedure to that required by the adaptive scheme, i.e.,
\begin{align*}
    &\mbox{agility gain}\\
    &\dff\frac{\mbox{ sampling budget of the non-adaptive procedure}}{\mbox{ sampling budget of the adaptive  procedure}}\ ,
\end{align*}
\begin{theorem}[Agility]
\label{th:agility}
For the $\Psi(\gamma_n)$ class of active users, when $\epsilon_n=o(1)$ and $n\epsilon_n=\omega(1)$, the agility gain of the adaptive robust spectrum detection algorithm with $Mn$ sampling budget is asymptotically lower bounded by $\left(\frac{1}{2^{K}}+\frac{2}{M}\right)^{-1}$, where $K$ is the number of exploration cycles.
\end{theorem}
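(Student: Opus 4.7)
The plan is to work out the adaptive sample expenditure explicitly, translate it into an equivalent per-channel budget for the non-adaptive scheme by matching the error expressions in Theorems~\ref{thm:NA_bound} and~\ref{thm:adaptive}, and then simplify the resulting budget ratio.

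First I would invoke the observation made immediately after Theorem~\ref{thm:adaptive}: asymptotically, $P_{\mbox{\tiny A}}(n)$ depends on the per-cycle budgets $M_1,\dots,M_K$ only through the detection-phase budget $M_{K+1}$, so under a total budget of $Mn$ the best policy is $M_k=1$ for $k=1,\dots,K$, with everything else funnelled into $M_{K+1}$. To estimate how much ``everything else'' is, I would iterate Lemmas~\ref{lmm:hole} and~\ref{lmm:occupied} with a union bound over the $K$ cycles. In the operating regime of interest $\gamma_n\to\infty$ (as forced by Corollary~\ref{cor:power_NA}), Lemma~\ref{lmm:hole} yields $m_K\ge (\gamma_n/(1+\gamma_n))^K m_0=(1-o(1))m_0$, and Lemma~\ref{lmm:occupied} applied inductively gives $\ell_k\le 2^{-k}\ell_0(1+o(1))$. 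Since $m_0\sim n\epsilon_n=o(n)$ and $\ell_0\sim n$, we obtain $|\G_k|\le 2^{-k}n(1+o(1))$ with probability $1-o(1)$.

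Second, the total exploration cost is $\sum_{k=1}^{K}|\G_{k-1}|\le n\sum_{k=0}^{K-1}2^{-k}(1+o(1))=2n(1-2^{-K})(1+o(1))$, leaving at least $Mn-2n(1-2^{-K})(1+o(1))$ samples for the detection phase, spread over $|\G_K|\le 2^{-K}n(1+o(1))$ surviving channels. Hence $M_{K+1}\doteq 2^K(M-2)+2$, equivalently $M\doteq M_{K+1}/2^K+2(1-2^{-K})$.

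Third, I would match error probabilities. Using the closed forms in Theorems~\ref{thm:NA_bound} and~\ref{thm:adaptive}, the condition $P_{\mbox{\tiny A}}(n)\doteq P_{\mbox{\tiny NA}}(n)$ is equivalent to $(1+\gamma_n)^{M_{\mbox{\tiny NA}}}\epsilon_n\doteq (1+\gamma_n)^{M_{K+1}}2^K\epsilon_n$, giving $M_{\mbox{\tiny NA}}=M_{K+1}+K\log 2/\log(1+\gamma_n)$. Because $\gamma_n\to\infty$, the additive correction vanishes and $M_{\mbox{\tiny NA}}\doteq M_{K+1}$. The agility gain is therefore
\[
\text{agility gain}=\frac{M_{\mbox{\tiny NA}}\,n}{Mn}\doteq\frac{M_{K+1}}{M}\ge\frac{1}{1/2^K+2/M_{K+1}},
\]
where the last step uses $M\le M_{K+1}/2^K+2$. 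Finally, the elementary inequality $M_{K+1}\ge M$ (which, after substituting $M_{K+1}\doteq 2^K(M-2)+2$, reduces to $M(2^K-1)\ge 2(2^K-1)$, i.e., $M\ge 2$) gives $2/M_{K+1}\le 2/M$, so the agility gain is bounded below by $(2^{-K}+2/M)^{-1}$, as claimed.

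The main obstacle is keeping the conditional high-probability statements of Lemmas~\ref{lmm:hole}--\ref{lmm:occupied} tight enough to chain across all $K$ cycles. Because the cycle-$k$ failure probabilities scale as $\exp(-m_{k-1}/n^{\alpha})$ and $\exp(-2\ell_{k-1}/(\log n)^{2})$, one must verify that $\ell_k=\omega((\log n)^2)$ and $m_k=\omega(n^{\alpha})$ for each $k$ so that the union bound survives---this uses $n\epsilon_n=\omega(1)$ together with a suitable choice of $\alpha$. Once this probabilistic bookkeeping is handled, the budget accounting and the closing algebraic inequality are routine.
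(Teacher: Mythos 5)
Your proposal is correct and follows essentially the same route as the paper: both arguments combine the geometric decay of retained channels from Lemmas~\ref{lmm:hole} and~\ref{lmm:occupied} (exploration cost $\doteq 2n(1-2^{-K})$), match the error expressions of Theorems~\ref{thm:NA_bound} and~\ref{thm:adaptive} to relate $M_{K+1}$ and the non-adaptive per-channel budget, and reduce the ratio to $\left(\frac{1}{2^K}+\frac{2}{M}\right)^{-1}$. The only difference is direction---the paper assigns the budget $Mn$ to the non-adaptive scheme and upper-bounds the adaptive expenditure by $n(2+M/2^K)$, whereas you assign $Mn$ to the adaptive scheme, derive $M_{K+1}\doteq 2^K(M-2)+2$, and lower-bound the equivalent non-adaptive budget; your closing inequalities (using $M_{K+1}\geq M$ for $M\geq 2$) are valid and yield the same bound.
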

\begin{proof}
See Appendix \ref{app:th:agility}.
\end{proof}
It is noteworthy that while the number of exploration cycles $K$ can be made arbitrarily large (but fixed as a function of $n$), increasing it beyond some point will affect the agility very insignificantly. More specifically, for large $K$, the agility gain lower bound will be dominated by the term $\frac{M}{2}$. This underlines the fundamental limit of the agility gain yielded by the adaptive procedure.

An analogue of Corollary~\ref{cor:power_NA} can be derived for the adaptive procedure, providing a necessary and sufficient condition on the scaling of $\gamma_n$ for guaranteeing a reliable hole detection with the proposed algorithm. For comparison purposes we assume that both adaptive and non-adaptive procedures are granted the same sampling budget.
\begin{corollary}[Adaptive Power Scaling]
\label{cor:power_adaptive}
For the $\Psi(\gamma_n)$ class of active users, when $\epsilon_n=o(1)$ and $n\epsilon_n=\omega(1)$, given that the sampling budget is $Mn$, a necessary and sufficient condition for $P_{\mbox{\tiny A}}(n)\rightarrow 0$ as $n\rightarrow \infty$ is that
\begin{equation}\label{eq:gamma_adaptive}
\gamma_n= \omega\left(\sqrt[M']{\frac{1}{2^K\epsilon_n}}\right)\ ,
\end{equation}
where $M'\geq 2^{K}(M-2)+2$.
\end{corollary}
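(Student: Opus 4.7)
The plan is to leverage the error characterization in Theorem~\ref{thm:adaptive} and identify, under the total sampling constraint $Mn$, the largest achievable value of the detection-phase per-channel budget $M_{K+1}$; this in turn yields the weakest scaling on $\gamma_n$ guaranteeing $P_{\mbox{\tiny A}}(n)\to 0$.

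First, I would start from
\begin{equation*}
P_{\mbox{\tiny A}}(n) \doteq 1 - \bigl(1 + [(1+\gamma_n)^{M_{K+1}}\cdot 2^K \epsilon_n]^{-1}\bigr)^{-T}\ ,
\end{equation*}
which shows that $P_{\mbox{\tiny A}}(n)\to 0$ is equivalent to $(1+\gamma_n)^{M_{K+1}}\cdot 2^K \epsilon_n \to \infty$. Because $\epsilon_n=o(1)$ and $2^K$ is constant in $n$, this forces $\gamma_n\to\infty$, so $(1+\gamma_n)^{M_{K+1}}\sim \gamma_n^{M_{K+1}}$, and the condition rearranges to $\gamma_n = \omega(\sqrt[M_{K+1}]{1/(2^K \epsilon_n)})$. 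The weakest such scaling (the slowest growth of $\gamma_n$) is obtained by maximizing the exponent $M_{K+1}$, which will play the role of $M'$ in the statement.

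Second, I would use Lemmas~\ref{lmm:hole} and~\ref{lmm:occupied} to quantify the sampling cost of each cycle. Since $\gamma_n\to\infty$, Lemma~\ref{lmm:hole} gives $m_k/m_{k-1}\to 1$, so essentially all holes survive the $K$ exploration cycles. Lemma~\ref{lmm:occupied} with $c_k\doteq 1/2$ (worst case $p_i=\gamma_n$) yields $\ell_k \doteq \ell_{k-1}/2$. Combining these with $|\H_1|\doteq n$ and $|\H_0|\doteq \epsilon_n n = o(n)$ gives $|\G_k|\doteq n/2^k$. The total-budget identity then reads
\begin{equation*}
Mn \;=\; \sum_{k=1}^{K} M_k\, |\G_{k-1}| \;+\; M_{K+1}\, |\G_K| \;\doteq\; \sum_{k=1}^{K} \frac{M_k\, n}{2^{k-1}} \;+\; \frac{M_{K+1}\, n}{2^K}\ .
\end{equation*}
The remark following Theorem~\ref{thm:adaptive} (that any $M_k\geq 1$ preserves the adaptive guarantees) implies that $M_{K+1}$ is maximized by setting $M_k = 1$ for $k=1,\ldots,K$. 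Using the geometric identity $\sum_{k=1}^{K} 2^{-(k-1)} = 2 - 2^{-(K-1)}$ and solving for $M_{K+1}$ yields the target value $M' = M_{K+1} = 2^K(M-2) + 2$; any admissible allocation that spends more per exploration cycle only decreases $M_{K+1}$, which explains the ``$\geq$'' in the statement.

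The hard part will be rigorously upgrading the per-cycle high-probability concentration statements of Lemmas~\ref{lmm:hole} and~\ref{lmm:occupied} to a deterministic-looking budget identity used above. The cleanest route is a union bound over the $K$ cycles, verifying that the aggregate failure probability, $\sum_{k=1}^{K} \bigl[\exp(-m_{k-1}/n^\alpha) + 2\exp(-2\ell_{k-1}/(\log n)^2)\bigr]$, is negligible, using $m_{k-1}\doteq\epsilon_n n$ (hence $m_{k-1}=\omega(1)$ by the assumption $n\epsilon_n=\omega(1)$) and $\ell_{k-1}\doteq n/2^{k-1}$. On the complementary event the budget identity holds up to $1+o(1)$ multiplicative factors, which do not affect the asymptotic value of $M_{K+1}$; combined with the equivalence derived in the first step, this completes both the necessity and sufficiency.
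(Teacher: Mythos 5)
Your proposal is correct and follows essentially the same route as the paper: it reads off the condition $(1+\gamma_n)^{M_{K+1}}\cdot 2^K\epsilon_n\rightarrow\infty$ from the characterization of $P_{\mbox{\tiny A}}(n)$ in Theorem~\ref{thm:adaptive}, and then maximizes $M_{K+1}$ under the total budget $Mn$ by taking one sample per channel per exploration cycle, yielding $M'=M_{K+1}\geq 2^K(M-2)+2$ exactly as in the paper's appendix (which reuses the budget computation from the proof of Theorem~\ref{th:agility}). Your extra remarks on the union bound over the $K$ cycles are consistent with how the paper handles the concentration events in the proof of Theorem~\ref{thm:adaptive}.
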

\begin{proof}
See Appendix \ref{app:cor:power_adaptive}.
\end{proof}
Comparing the result above with that of Corollary \ref{cor:power_NA} shows that an adaptive scheme can cope with much weaker active users than a non-adaptive scheme. More specifically, by noting that $M'$ is substantially larger than $M$, the power scaling requirement in the adaptive scenario, which is smaller than $\omega\left(\sqrt[M']{\frac{1}{\epsilon_n}}\right)$ becomes substantially smaller than its counterpart in the non-adaptive scenario $\omega\left(\sqrt[M]{\frac{1}{\epsilon_n}}\right)$.  As a result, there are scenarios where non-adaptive schemes fail to successfully identify $T$ holes, while the adaptive scheme succeeds.

\subsection{Optimal Exploration Cycles ($K$)}
\label{sec:K}

The role of the exploration cycles is to feed the detector with a group of channels with a more condensed proportion of spectrum holes to occupied channels. According to the characterization of $P_{\mbox{\tiny A}}(n)$ given in \eqref{eq:P_adaptive}, changing $K$ has two effects on the asymptotic behavior of $P_{\mbox{\tiny A}}(n)$. Specifically, increasing $K$ results in a decrease in $P_{\mbox{\tiny A}}(n)$ through its direct impact via the term $2^K$. Increasing the exploration cycles $K$, on the other hand, increases the sampling resources during exploration and leaves less resources for the detection phase, i.e., $M_{K+1}$ decreases. Following the same arguments as in the proof of Theorem~\ref{th:agility} it can be readily verified that the superposition of the two effects implies that $P_{\mbox{\tiny A}}(n)$ will be monotonically decreasing with increasing $K$. The results of Theorem~\ref{th:agility}~and Corollary~\ref{cor:power_adaptive} also substantiate that increasing $K$ enforces a less stringent constraint on power scaling and enhances the agility.

An important feature in the analysis of the proposed algorithm was that the exploration step asymptotically retains all the existing holes, while discarding a large number of occupied channels. This ensures the task of the detection phase is much more effective. As all the results presented so far hint that a large value of $K$ yields higher gains it is natural to ask what is the ``optimal'' value of $K$ so that the exploration phase asymptotically retains all the holes. This imposes an upper bound on possible values of $K$, and gives rise to the following theorem.

\begin{theorem}\label{th:K}
The optimal growth of the number of cycles $K$ so that the exploration phase asymptotically retains all the holes for consideration in the detection phase is given by
\begin{equation}\label{eq:K}
    K^*\doteq\log\log \frac{1}{\epsilon_n}\ .
\end{equation}
Furthermore, the outcome of Theorem~2 remains valid with any choice of $K$ not exceeding $K^*$, which is no longer fixed as a function of $n$.
\end{theorem}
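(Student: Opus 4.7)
\emph{Proof sketch.} The plan is to quantify how large $K$ can be pushed while Lemmas~\ref{lmm:hole}--\ref{lmm:occupied} still guarantee that essentially all holes survive the exploration phase, and then to verify that the derivation of Theorem~\ref{thm:adaptive} extends to such $K$ that drift with $n$.

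First, I would iterate the lower bound of Lemma~\ref{lmm:hole} across the $K$ cycles via a union bound. Conditionally on $m_0 = |\H_0|$, the inequality
\[
    m_K \;\geq\; m_0 \left(\frac{\gamma_n}{1+\gamma_n}\right)^K
\]
holds with probability at least $1 - K\exp(-m_0/n^\alpha)$, since $\{m_k\}$ is non-increasing and the per-cycle failure probability decays exponentially in $m_{k-1}\geq m_K$. Analogously, iterating Lemma~\ref{lmm:occupied} yields $\ell_K \doteq n/2^K$ with probability $1-o(1)$. The condition that the exploration asymptotically retains all holes, $m_K \doteq m_0$, is therefore equivalent to
\[
    \left(\frac{\gamma_n}{1+\gamma_n}\right)^K \to 1 \quad\Longleftrightarrow\quad K = o(\gamma_n).
\]

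Second, I would substitute these estimates into the derivation of Theorem~\ref{thm:adaptive}. Because the failure probabilities in Lemmas~\ref{lmm:hole}--\ref{lmm:occupied} are exponentially small in quantities that grow polynomially in $n$, a union bound over $K$ cycles remains $o(1)$ for any $K$ of sub-polynomial growth; hence \eqref{eq:P_adaptive} continues to hold, and Corollary~\ref{cor:power_adaptive}'s scaling $\gamma_n = \omega(\sqrt[M']{1/(2^K \epsilon_n)})$ with $M' \geq 2^K(M-2)+2$ still governs reliable detection. Combining the retention constraint $K = o(\gamma_n)$ with this critical scaling then pins down $K^*$: for large $K$ one has $M' \sim 2^K M$, so
\[
    \log \gamma_n \;\sim\; \frac{\log(1/\epsilon_n)}{2^K M},
\]
and requiring $\gamma_n$ to diverge forces $2^K = o(\log(1/\epsilon_n))$, i.e., $K = o(\log\log(1/\epsilon_n))$. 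The borderline growth $K^* \doteq \log\log(1/\epsilon_n)$ therefore yields the optimal admissible rate, and any $K$ of this order or smaller leaves all the approximations in the proof of Theorem~\ref{thm:adaptive} intact.

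The main obstacle is tracking the uniform validity of the asymptotic equivalences in Lemmas~\ref{lmm:hole}--\ref{lmm:occupied} and in Theorem~\ref{thm:adaptive} when $K = K(n) \to \infty$: each ``$\doteq$'' must be upgraded to a quantitative bound whose cumulative error over $K$ cycles is still $o(1)$. This is straightforward given the exponential decay of the lemma tails, but it is the one place where careful bookkeeping is required, particularly near the borderline $K \doteq K^*$ where $\gamma_n$ sits at the edge of the power-scaling regime and the constants in the bounds must be tracked to ensure that the exploration is still effective.
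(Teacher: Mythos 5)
Your proposal follows essentially the same route as the paper: derive the hole-retention constraint $K=o(\gamma_n)$ from $\big(\tfrac{\gamma_n}{1+\gamma_n}\big)^K\to 1$, combine it with the adaptive power-scaling threshold $\gamma_n=\omega\big(\sqrt[M']{1/(2^K\epsilon_n)}\big)$ with $M'\geq 2^K(M-2)+2$, and solve for the critical $K$, checking that the non-hole side only imposes the weaker $K=o(\log n)$. One small slip: $2^K=o(\log(1/\epsilon_n))$ gives $K\dotlt \log\log(1/\epsilon_n)$, not $K=o(\log\log(1/\epsilon_n))$ --- your own concluding sentence ($K^*\doteq\log\log(1/\epsilon_n)$ is admissible as the borderline) is the correct reading and matches the paper.
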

\begin{proof}
See Appendix \ref{app:th:K}.
\end{proof}
\section{Simulation Results}\label{sec:simulations}
\begin{figure}[t]
\centering
\includegraphics[width=3.5in]{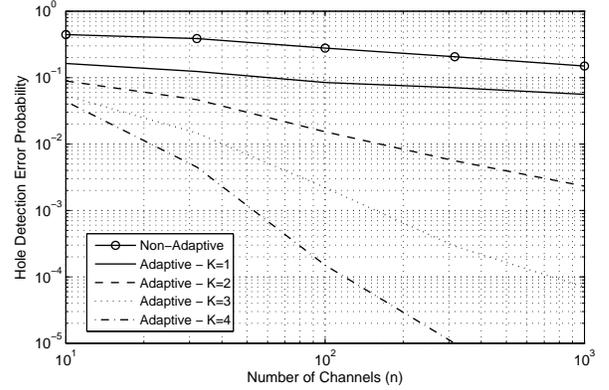}\\
\caption{Comparison of the error probabilities $P_{\tiny{\mbox{A}}}(n)$ and $P_{\tiny{\mbox{NA}}}(n)$ for detecting $T=2$ over a wide range of $n$.}\label{fig:reliability}
\end{figure}
In this section we provide some simulation results to empirically demonstrate the effectiveness of the proposed adaptive procedure. In Figure~\ref{fig:reliability} we aim at identifying $T=2$ spectrum holes and compare the reliability of the non-adaptive scheme with that of the proposed adaptive procedure with $K=1,\dots,4$ exploration cycles. Comparisons are provided over the range of $n=10-1000$ channels. The sampling budget in both schemes is set as $5n$ which means that in the non-adaptive scheme each channel is measured $M=5$ times. In the adaptive scheme in each cycle of the exploration phase each channel is measured once. The sampling resources not used in the exploration phase are equally divided among the remaining channels retained by the exploration phase. We set the channel occupancy probability $\epsilon_n=n^{-2/3}$, which clearly satisfies the conditions $\epsilon_n=o(1)$ and $n\epsilon_n=\omega(1)$, and finally assume that the power of active users are $p_i=\gamma_n=n^{1/5}$, $\forall i\in\H_1$. As predicted by the analysis, for large values of $n$ the adaptive procedure yields a significant improvement over the non-adaptive scheme, e.g., for $n=100$ we gain two orders of magnitude in error probability after 4 cycles of exploration. The improvement attained for small values of $n$ is also considerable, e.g., for $n=20$ it is one order of magnitude. It is noteworthy that the choices of $\epsilon_n=n^{-2/3}$ and $p_i=\gamma_n=n^{1/5}$ have been arbitrary and extensive simulations show that the gains are not very sensitive to these choices.

Next, we investigate the agility gain. Define the normalized sampling budget as the aggregate sampling budget divided by the number of channels. In Figure~\ref{fig:agility} we plot this quantity against the number of channels $n$, when requiring both approaches to have the same error probability for detecting $T=2$ holes. For the adaptive scheme we consider the performance with $K=1,\dots,5$ cycles of exploration. Again we consider the choices of $\epsilon_n=n^{-2/3}$ and $p_i=\gamma_n=n^{1/5}$. It is seen that the agility is improved by increasing $K$ and for $K=5$ the adaptive procedure requires about $80 \%$ less sampling budget than the non-adaptive procedure. In other words the adaptive procedure is 5 times faster than the non-adaptive scheme for detecting a hole with error probability is $P_{\tiny{\mbox{NA}}}=10^{-4}$

\begin{figure}[t]
\centering
\includegraphics[width=3.5in]{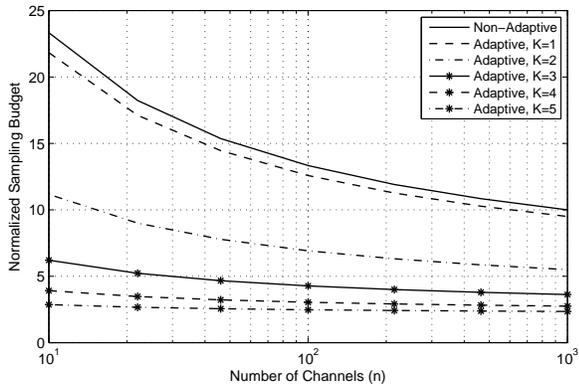}\\
\caption{Average number of observations taken per channel versus the number of channels $n$. All the curves correspond to the target error probability for detecting $T=2$ holes is $P_{\tiny{\mbox{A}}}(n)=P_{\tiny{\mbox{NA}}}(n)=10^{-4}$.}
\label{fig:agility}
\end{figure}

\begin{figure}[t]
\centering
\includegraphics[width=3.5in]{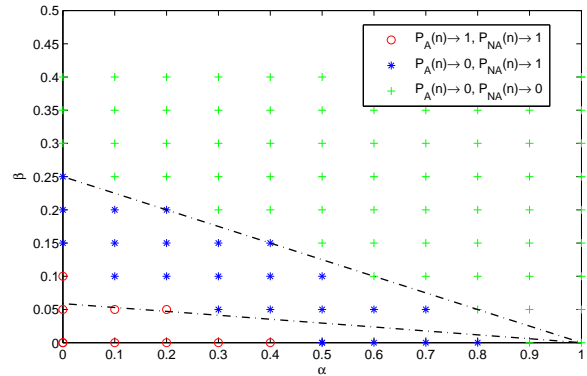}\\
\caption{Detectability regions for $M=5$, $K=4$, $\epsilon_n=n^{\alpha-1}$ and $\gamma_n=n^\beta$.}
\label{fig:detectability}
\end{figure}

\begin{figure}[!t]
\centering
\includegraphics[width=3.5in]{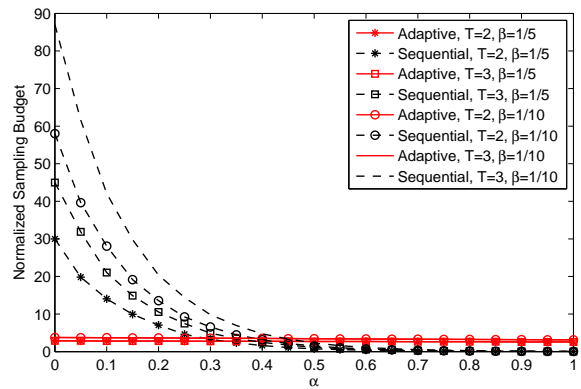}\\
\caption{Average number of observations taken per channel versus sparsity of the holes in the adaptive and the sequential method of \cite{Lai:IT10_Submitted} for achieving $P_{\tiny{\mbox{A}}}(n)=10^{-4}$.}
\label{fig:sequential}
\end{figure}

\begin{figure*}[t]
\hspace{-.5in}
\includegraphics[width=7.5in]{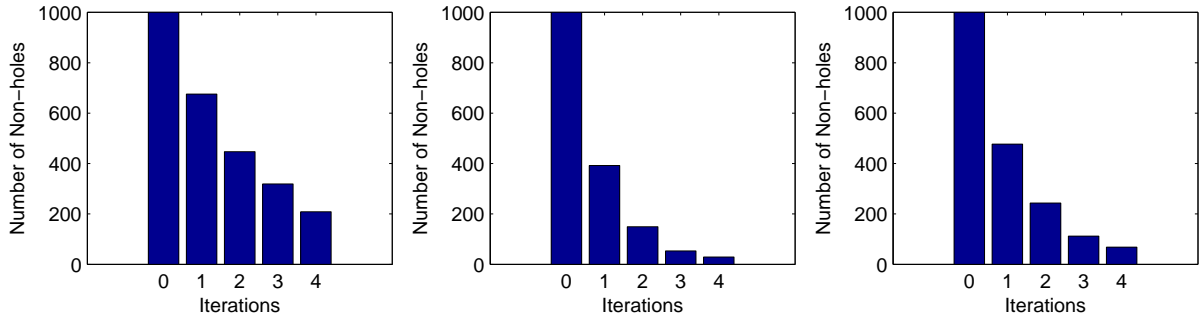}\\
\caption{Illustration of the exploration phase in eliminating the non-holes for $n=1000$ in three different system settings.}
\label{fig:histogram}
\end{figure*}

Figure~\ref{fig:detectability} depicts the results for assessing the necessary and sufficient conditions on the scaling of the power of the active users provided in Corollaries~\ref{cor:power_NA}~and~\ref{cor:power_adaptive} for non-adaptive and adaptive procedures. This is intended to depict that there exists a region where the non-adaptive procedure fails to identify spectrum holes, while the adaptive procedure succeeds. We assume that $\epsilon_n=n^{\alpha-1}$ for $\alpha\in[0,1]$ and $\gamma_n=n^\beta$ for $\beta>0$. By using Corollaries~\ref{cor:power_NA}~and~\ref{cor:power_adaptive} we find that the necessary and sufficient conditions for successful detection by the non-adaptive and adaptive procedures are $\beta\geq \frac{1-\alpha}{M}$ and $\beta\geq \frac{1-\alpha}{M'}$, respectively, where $M'\geq 2^K(M-2)+2$. The dashed lines depict these theoretical boundaries and split the space of $(\alpha,\beta)$ into three separate regions. We also provide the simulation results for some pairs of $(\alpha,\beta)$ in each of these regions. It is expected that in the lower region both procedures fail; in the middle region only the adaptive procedure succeeds and in the upper region both succeed. The simulation results (except for a few points) support the theoretical asymptotic predictions. In the simulations we have set $M=5$, $K=4$, and $n=1000$. We remark that requiring as low as about 5 sensing samples per channel is substantially less than the compressed-sensing-based approaches (e.g., \cite{Tan:ICASSP07}), that for obtaining a reliable estimate of the PSD take samples at a rate equal to at least half Nyquist rate, which for a wideband channel is large.

In Figure~\ref{fig:sequential}, we compare the agility of our proposed adaptive procedure with that of the quickest search~\cite{Lai:IT10_Submitted} for achieving the target error probability $P_{\tiny{\mbox{A}}}(n)=10^{-4}$. We compare the two schemes at different sparsity levels for the distribution of the spectrum holes. Specifically, we consider $\epsilon_n=n^{\alpha-1}$ for $\alpha\in[0,1]$ where smaller values of $\alpha$ correspond to sparser distribution of the holes. The objective is to identify $T=2,3$ holes out of $n=1000$ channels. The powers for the active users are assumed to be $\gamma=n^\beta$ for $\beta=1/5, 1/10$ and we deploy $K=4$ cycles of exploration. It is observed that for highly sparse distribution of holes (small enough values of $\alpha$,) our proposed adaptive procedure performs substantially better and for higher values of $\alpha$ which correspond to the case that there exist abundant number of spectrum holes, the quickest search method is outperforming.

Finally, Figure~\ref{fig:histogram} illustrates three snapshots showing the efficiency of the exploration phase in eliminating the non-holes for three different system realization. We consider $n=1000$ channels and set $\epsilon_n=n^{-2/3}$, $\gamma_n=n^{1/5}$. Figure~\ref{fig:histogram} plots the performance during $K=4$ cycles of exploration, where it shows that the number of non-holes reduces considerably after each exploration cycle. Also for these system realizations we have 10 holes, and we have observed that all the holes are retained throughout the exploration phase.
\section{Conclusions}
\label{sec:conclusions}
In this paper we have presented an adaptive sensing methodology for identifying multiple spectral holes in a wideband cognitive radio system. By gradually adjusting the measurement process using information gleaned from the previous measurements we are able to significantly improve the probability of correctly detecting multiple spectrum holes. This dramatic gain is patent both in the theoretical analysis and in the simulation results. More importantly, in the cognitive radio setting this improvement translates into a significant increase in agility, allowing spectrum holes to be identified quickly and therefore widening the window of time for opportunistic data transmission. Such dramatic improvements are not possible without the use of adaptive sampling techniques, as demonstrated.

\appendices
\begin{figure*}[b]
\hrulefill
\setcounter{MYtempeqncnt}{\value{equation}}
\setcounter{equation}{29}
\begin{align}
  \nonumber P(m_{k}\leq b)&= P\left(m_{k} \leq \frac{\gamma_n}{1+\gamma_n}\cdot m_{k-1}\right)\\
  & \label{eq:m_bound1} \leq \left[\Big((1-a_k)(1+\gamma_n)\Big)^{\frac{1}{1+\gamma_n}} \left(\frac{a_k(1+\gamma_n)}{\gamma_n}\right)^{\frac{\gamma_n}{1+\gamma_n}}\right]^{m_{k-1}} \\
  \label{eq:m_bound2} &  \leq \left[\Big((1+\gamma_n)\exp({-\lambda_k(1+\gamma_n)})\Big)^{\frac{1}{1+\gamma_n}} \left(\frac{1+\gamma_n}{\gamma_n}\right)^{\frac{\gamma_n}{1+\gamma_n}}\right]^{m_{k-1}} \\
  \nonumber &=\Bigg[\exp\Bigg(\frac{1}{1+\gamma_n}\Bigg(\log (1+\gamma_n)-\lambda_k(1+\gamma_n)+\underset{\leq 1}{\underbrace{\gamma_n\log\left(1+\frac{1}{\gamma_n}\right)}}\bigg)\Bigg)\Bigg]^{m_{k-1}}\\
   \nonumber &\leq \Bigg[\exp\Bigg(\Bigg(\frac{\log (1+\gamma_n)}{1+\gamma_n}-\lambda_k+\frac{1}{1+\gamma_n}\Bigg)\Bigg)\Bigg]^{m_{k-1}}\\
  \label{eq:m_bound3} &\leq \exp\Big(-\frac{m_{k-1}}{n^\alpha}\Big)\ ,
\end{align}
\setcounter{equation}{\value{MYtempeqncnt}}
\end{figure*}

\section{Proof of Lemma \ref{lmm:extreme}}\label{app:lmm:extreme}
Let $F(t;m)\dff P(Y_i\leq t)$ denote the CDF of the $\dGamma(M,\alpha_m)$ random variables $\{Y_i\}_{i=1}^m$. Also let $Q_{(i)}(w;m)\dff P(W^{(i)}_m\leq w)$ denote the CDF of $W^{(i)}_m$. Therefore,
\begin{align}\label{eq:lmm:extreme1}
\nonumber &Q_{(i)}(w;m) =  P(W^{(i)}_m\leq w)= P(Y^{(i)} \leq w b_m)\\
\nonumber  &= 1-\sum_{k=0}^{i-1}{m\choose k}\big[F(wb_m;m)\big]^k\big[1-F(wb_m;m)\big]^{m-k}\\
&= 1-\big[1-F(wb_m;m)\big]^{m} \ \sum_{k=0}^{i-1}{m\choose k}\left[\frac{F(wb_m;m)}{1-F(wb_m;m)}\right]^k\ .
\end{align}
Now recall that for a $\dGamma(M,\alpha_m)$ distribution
\begin{equation*}
    F(y;m)=\frac{1}{\Gamma(M)\alpha_m}\int_0^y \left(\frac{x}{\alpha_m}\right)^{M-1}\exp\left(-\frac{x}{\alpha_m}\right) dx\ .
\end{equation*}
Noting that for $0\leq x\leq y$ we have $\exp(-y/\alpha_m)\leq \exp(-x/\alpha_m)\leq 1$ we immediately get
\begin{equation*}
    \frac{\exp(-y/\alpha_m)}{\Gamma(M+1)}\cdot \left(\frac{y}{\alpha_m}\right)^M  \ \leq\ F(y;m)\ \leq\frac{1}{\Gamma(M+1)}\cdot \left(\frac{y}{\alpha_m}\right)^M\ .
\end{equation*}
By noting that we had defined  $b_m\dff\alpha_m\left[\frac{\Gamma(M+1)}{m}\right]^{\frac{1}{M}}$, after some simplifications we find the following bounds above for $F(wb_m;m)$
\begin{equation*}
    \frac{w^M}{m}\cdot \exp\left(-w^M\cdot \frac{\Gamma(M+1)}{m}\right) \ \leq \ F(wb_m;m) \ \leq \ \frac{w^M}{m} \ ,
\end{equation*}
By using the \emph{little-o} notation as $m\rightarrow\infty$ we have $F(wb_m;m)=\frac{w^M}{m} (1+o(1))$ . Hence, for the term $\left[1-F(wb_m;m)\right]^m$ as $m\rightarrow\infty$ we have
\begin{align}\label{eq:lmm:extreme2}
\nonumber\left[1-F(wb_m;m)\right]^m & = \left[1-\frac{w^M}{m} (1+o(1))\right]^m\\
\nonumber &= \nonumber  \left[1-\frac{w^M}{m}+o(m^{-1})\right]^m\\
\nonumber &= \exp\left[m \log \left(1-\frac{w^M}{m}+o(m^{-1})\right)\right]\\
\nonumber &= \exp\left[-w^M+o(1)\right]\\
& \doteq \exp(-w^M)\ .
\end{align}
On the other hand for every fixed $k$ we get
\begin{align}\label{eq:lmm:extreme3}
\nonumber {m\choose k} & \left[\frac{F(wb_m;m)}{1-F(wb_m;m)}\right]^k\\
\nonumber & =\frac{\prod_{i=0}^{k-1}m(1-i/m)}{k!} \left[\frac{\frac{w^M}{m} (1+o(1))}{1-\frac{w^M}{m}(1+o(1))}\right]^k\\
\nonumber &= \frac{\prod_{i=0}^{k-1}m(1-o(1))}{k!} \left[\frac{w^M}{m} (1+o(1))\right]^k\\
& \doteq \frac{m^k}{k!}\left[\frac{w^M}{m}\right]^k=\frac{w^{Mk}}{k!}\ .
\end{align}
Equations \eqref{eq:lmm:extreme1}-\eqref{eq:lmm:extreme3} yield
\begin{equation*}
    Q_{(i)}(w;m)\doteq 1-\exp(-w^M)\sum_{k=0}^{i-1}\frac{w^{KM}}{k!}\ ,
\end{equation*}
which is the desired result.

\section{Proof of Lemma \ref{lmm:hole}}
\label{app:lmm:hole}
For the non-adaptive procedure by invoking the assumption $\epsilon_n=o(1)$, we know that $\gamma_n=\omega(1)$ from Corollary~\ref{cor:power_NA}. Throughout the analysis for the adaptive procedure we also assume that $\gamma_n=\omega(1)$ and characterize the precise necessary and sufficient condition for the scaling rate of $\gamma_n$ at the end of Section~\ref{sec:adaptive}. We start with the following remarks.
\begin{remark}\label{remark:gamma}
Let $X\sim \dGamma(M,a)$. Then for any $\lambda>0$ we have $P(X< \lambda)\geq 1-\exp\left(-\frac{\lambda}{a}\right)$.
\end{remark}
\begin{remark}\label{remark:chernoff}
For $B$ distributed as ${\rm Binomial}(m,a)$ and $b<\bbe[B]$ we have
\begin{equation*}
    P(B\leq b)\leq\left(\frac{m-ma}{m-b}\right)^{m-b}\left(\frac{ma}{b}\right)^b.
\end{equation*}
\end{remark}
Let us define $\J_0=\G_0\cap\H_0$ and for $k=1,\dots,K$ let $\J_k\dff\G_k\cap\H_0$ be the set containing the indices of the spectrum holes retained by the $k^{th}$ exploration cycle. By the definition of $m_k$ we have $|\J_k|=m_k$. Also, for any $i\in\G_{k-1}$ define
\begin{equation*}
    T^k_i\dff\boldsymbol{1}\{Y^k_i<\lambda_k(1+\gamma_n)\}\ ,
\end{equation*}
where
\begin{equation*}
\boldsymbol{1}\{{\cal A}\}=\left\{
    \begin{array}{cc}
      1 & \mbox{if}\;\; {\cal A}\;\mbox{is true}\\
      0 & \mbox{if}\;\; {\cal A}\;\mbox{is false}
    \end{array}\right., \quad\mbox{for}\quad k=1,\dots,K\ .
\end{equation*}
According to the algorithm in Table 1 for the $i^{th}$ channel, which can be either a hole or a non-hole, $T^k_i$ is 1 if is retained by the $k^{th}$ exploration cycle and is 0 otherwise. Therefore, the number of {\em holes} retained by the $k^{th}$ exploration cycle is $m_{k}=\sum_{i\in\J_{k-1}}T^{k}_i$. By noting that $|\J_{k-1}|=m_{k-1}$ we find that
\begin{equation*}
     m_{k}\sim{\rm Binomial}(m_{k-1},a_k),\quad\mbox{for}\quad k=1,\dots,K\ ,
\end{equation*}
where
\begin{align*}
    a_k & \dff P(T^k_i=1\med i\in\J_{k-1})\\
    & =P(Y^k_i<\lambda_k(1+\gamma_n)\med i\in\J_{k-1})\leq 1\ .
\end{align*}
Based on (\ref{eq:Y_dist_adaptive}), $Y^k_i$ is distributed as $\dGamma(M_k,1)$ for $i\in\J_{k-1}$. Therefore, by using Remark \ref{remark:gamma} we find that
\begin{equation}
    \label{eq:Gamma_bound}
    a_k\geq 1-\exp(-\lambda_k(1+\gamma_n)) \ .
\end{equation}
\begin{align*}
    P&\left(\frac{\gamma_n}{1+\gamma_n}\cdot m_{k-1}\leq m_{k} \leq  m_{k-1}\right)\\
    &  =P\left(m_{k} \leq \frac{\gamma_n}{1+\gamma_n}\cdot m_{k-1}\right)\\
    & \geq 1-\exp\Big(-\frac{m_{k-1}}{n^\alpha}\Big)\ ,
\end{align*}
which concludes the result.
\begin{figure*}[b]
\setcounter{MYtempeqncnt}{\value{equation}}
\setcounter{equation}{36}
\hrulefill
\begin{equation}\label{eq:A_event}
A_k=\left\{m_{k-1}\geq m_{k} \geq \Big(1-\frac{1}{1+\gamma_n}\Big)m_{k-1}\quad\mbox{ and } \quad \Big(\frac{1}{2}-\frac{1}{\log n}\Big)\ell_{k-1} \leq  \ell_{k}\leq  \Big(\frac{1}{2}+\frac{1}{\log n}\Big)\ell_{k-1}\right\}\ .
\end{equation}
\hrulefill
\begin{align}\label{eq:A_event2}
    A\dff\left\{m_{0}\geq m_{K} \geq \Big(1-\frac{1}{1+\gamma_n}\Big)^Km_0\quad\mbox{and}\quad \Big(\frac{1}{2}-\frac{1}{\log n}\Big)^K\ell_0 \leq  \ell_{K}\leq  \Big(\frac{1}{2}+\frac{1}{\log n}\Big)^K\ell_0\right\}\ .
\end{align}
\setcounter{equation}{\value{MYtempeqncnt}}
\end{figure*}

For sufficiently large $n$ (and thereof sufficiently large $\gamma_n$) we have $\frac{1}{1+\gamma_n}\geq \exp(-\lambda_k(1+\gamma_n))$ which in conjunction with (\ref{eq:Gamma_bound}) provides that for sufficiently large $n$
\begin{equation*}
    \frac{\gamma_n}{1+\gamma_n}\cdot m_{k-1}\leq a_km_{k-1} \ .
\end{equation*}
Therefore, by setting $b\dff\frac{\gamma_n}{1+\gamma_n}\cdot m_{k-1}$, for the random variable $m_{k}\sim{\rm Binomial}(m_{k-1},a_k)$ we have $b\leq \bbe[m_{k}]= a_km_{k-1}$. As a result, for this choice of $b$ the condition of Remark \ref{remark:chernoff} is satisfied and, for sufficiently large $n$ we find that $P(m_{k}\leq b)$ is lower bounded according to \eqref{eq:m_bound3}. Note that (\ref{eq:m_bound1}) holds according to Remark~\ref{remark:chernoff}. (\ref{eq:m_bound2}) is obtained by invoking the inequality given in (\ref{eq:Gamma_bound}), and noting that $a_k\leq 1$ as $a_k$ is a probability. Finally the inequality in (\ref{eq:m_bound3}) is justified by noting that for sufficiently large $n$ and any $\alpha>0$
\begin{equation*}
    \frac{\log (1+\gamma_n)}{1+\gamma_n}-\lambda_k+\frac{1}{1+\gamma_n}\leq -\frac{1}{n^\alpha} ,
\end{equation*}
as all terms in the LHS and RHS except the constant $-\lambda_k$ are diminishing in the asymptote of large $n$. Recall that $\J_{k}\subseteq\J_{k-1}$ and consequently, $m_{k}\leq m_{k-1}$. This fact in conjunction with \eqref{eq:m_bound3} provides that
\section{Proof of Lemma \ref{lmm:occupied}}
\label{app:lmm:occupied}
Let $\L_0=\G_0\cap\H_1$ and for $k=1,\dots,K$ define $\L_k\dff\G_k\cap\H_1$ which contains the set of the indices of the non-holes contained retained by the $k^{th}$ exploration cycle. By the definition of $\ell_k$ we have and $|\L_k|=\ell_k$. By recalling the definition of $T^k_i$ and taking into account their independence, the number of non-holes that the $k^{th}$ exploration cycle retains is $\ell_{k}=\sum_{i\in \L_{k-1}}T^k_i$. Applying Hoeffding's inequality provides
\setcounter{equation}{32}
\begin{equation}
  \label{eq:lmm:hole:proof1}
  P\bigg(\big|\ell_{k}-\bbe\big[\ell_{k}]\big|>\frac{\ell_{k-1}}{\log n}\bigg)\leq 2\exp\left(-\frac{2\ell_{k-1}}{(\log n)^2}\right).
\end{equation}
On the other hand
\setcounter{equation}{33}
\begin{equation}\label{eq:lmm:hole:proof2}
    \bbe\big[\ell_{k}]=\sum_{i\in \L_{k-1}}\bbe[T^{k}_i]=\sum_{i\in \L_{k-1}}P(T^{k}_i=1\med i\in\L_{k-1})\ .
\end{equation}
According to (\ref{eq:Y_dist_adaptive}), for $i\in\L_{k-1}$, $Y^k_i$ is distributed as $\dGamma(M_k,1+p_i)$. For any $i\in\L_{k-1}$ we denote the CDF of ${\rm Gamma}(M_k,1+p_i)$ by $F^k_i(t;n)$. Since $p_i\geq \gamma_n$ we get
\begin{align*}
    P& (T^k_i=1 \med i\in\L_{k-1}) =F^k_i\Big(\lambda_k(1+\gamma_n)\Big)\\
    & \leq F^k_i\Big(\lambda_k(1+p_i)\Big)\\
    &= \frac{1}{\Gamma(M_k)(1+p_i)^M}\int_0^{\lambda_k(1+p_i)}x^{M-1}\exp\left(-\frac{x}{1+p_i}\right)dx\\
    & = \frac{1}{\Gamma(M_k)}\int_0^{\lambda_k}x^{M-1}\exp\left(-x\right)dx\\
    & =\frac{1}{2},
\end{align*}
where the last step holds as we had defined $\lambda_k$ as the median of ${\rm Gamma}(M_k,1)$. As a result, from \eqref{eq:lmm:hole:proof2} we find that  $\bbe\big[\ell_{k}]\leq\frac{|\L_{k-1}|}{2}=\frac{\ell_{k-1}}{2}$. By selecting $c_k\dff\frac{\bbe[\ell_{k}]}{\ell_{k-1}}\leq\frac{1}{2}$ and recalling (\ref{eq:lmm:hole:proof1}) we get the desired result. Finally note that $F^k_i\Big(\lambda_k(1+\gamma_n)\Big)$ is monotonic in $p_i$. Therefore, the necessary and sufficient condition for having $\bbe\big[\ell_{k}]=\frac{\ell_{k-1}}{2}$ is that $F^k_i(\lambda_k(1+\gamma_n))= F^k_i(\lambda_k(1+p_i))$, which in turn requires that $p_i=\gamma_n$ for $i\in\H_1$.

\section{Proof of Theorem \ref{thm:adaptive}}\label{app:thm:adaptive}
Similar to the proof of Theorem \ref{thm:NA_bound}, for quantifying the performance of the {\em robust} detection scheme, we consider the worst-case scenario and set  $p_i=\gamma_n$ for all $i\in\H_1$. Define the event $A_k$ as in \eqref{eq:A_event}. 

Lemma~\ref{lmm:hole} and ~\ref{lmm:occupied} characterize the probability of event $A_k$ given $m_{k-1}$ and $\ell_{k-1}$, and state that,
\begin{align}\label{eq:zeta}
    \nonumber \zeta_{k}&\dff P(A_k\med m_{k-1},\ell_{k-1})\\
    \nonumber &\geq P\left(m_{k-1}\geq m_{k} \geq \Big(1-\frac{1}{1+\gamma_n}\Big)m_{k-1}\med m_{k-1}\right)\\
    \nonumber & + P\left(\Big(\frac{1}{2}-\frac{1}{\log n}\Big)\ell_{k-1} \leq  \ell_{k}\leq  \Big(\frac{1}{2}+\frac{1}{\log n}\Big)\ell_{k-1}\med \ell_{k-1}\right)\\
    & \geq 1-\exp(-\frac{m_{k-1}}{n^\alpha})-2\exp(-\frac{2\ell_{k-1}}{(\log n)^2})\ .
\end{align}
By concatenating all the bounds we conclude that the event $(A_1,A_2,\dots,A_K)$ is a sufficient condition for the event $A$ defined in \eqref{eq:A_event2}. 

By noting that given $m_{k-1}$ and $\ell_{k-1}$, the event $A_k$ is independent of events $A_i$ for $i<k$, we immediately conclude that the event $A$ holds with probability at least
\begin{align*}
    P(A)&\geq P(A_1,A_2,\ldots,A_K)\\
    & =\prod_{i=1}^{K}P(A_k\med m_{k-1},\ell_{k-1})=\prod_{i=1}^{K}\zeta_i\ .
\end{align*}
In the asymptote of large $n$, for any {\em fixed} $K$ we have (conditions on the choice of $K$ are discussed in Theorem~\ref{th:K})
\begin{align}\label{eq:1}
\nonumber    &\lim_{n\rightarrow\infty}\Big(1-\frac{1}{1+\gamma_n}\Big)^K=1\ ,\\
\nonumber    \mbox{and}\quad &\lim_{n\rightarrow\infty}\Big(\frac{1}{2}-\frac{1}{\log n}\Big)^K=1\ ,\\
\mbox{and}\quad & \lim_{n\rightarrow\infty}\Big(\frac{1}{2}+\frac{1}{\log n}\Big)^K=1\ ,
\end{align}
which provide that in the asymptote of large $n$
\begin{equation*}
    A=\left\{m_K=m_0\quad\mbox{and}\quad \ell_K=\frac{\ell_0}{2^K}\right\}
\end{equation*}
with probability at least $\prod_{i=1}^{K}\zeta_i$. On the other hand, by recalling the assumptions $\epsilon_n=o(1)$ and $n\epsilon_n=\omega(1)$, according to the the law of large numbers we get $\ell_0\doteq n(1-\epsilon_n)\doteq n$ and $m_0\doteq n\epsilon_n=\omega(1)$. Hence, we can ensure that for $k=1,\dots,K$, $\zeta_k\rightarrow 1$ as $n\rightarrow\infty$. More specifically, by selecting $\alpha<\lim_{n\rightarrow\infty}\frac{\log n\epsilon_n}{\log n}$ for $\zeta_1$ we have
\begin{equation*}
    \lim_{n\rightarrow\infty}\frac{m_0}{n^\alpha}= \lim_{n\rightarrow\infty}\frac{n\epsilon_n}{n^\alpha}=+\infty\ ,
\end{equation*}
which shows that
\setcounter{equation}{38}
\begin{equation}\label{eq:exp1}
     \lim_{n\rightarrow\infty}\exp(-\frac{m_{0}}{n^\alpha})=0\ ,
\end{equation}
and
\begin{equation*}
    \lim_{n\rightarrow\infty}\frac{2\ell_{0}}{(\log n)^2} = \lim_{n\rightarrow\infty}\frac{2n}{(\log n)^2}=+\infty\ ,
\end{equation*}
which shows that
\begin{equation}\label{eq:exp2}
 \lim_{n\rightarrow\infty}2\exp\left(-\frac{2\ell_{k-1}}{(\log n)^2}\right)=0\ .
\end{equation}
Equations \eqref{eq:exp1}-\eqref{eq:exp2} and \eqref{eq:zeta} yield that $\lim_{n\rightarrow\infty}\zeta_1=1$. By following the same procedure, the same result for all $\zeta_k$ can be recovered. As a result, for sufficiently large $n$, with probability 1 we have $m_{K}=m_0$ and $\ell_{K}=\ell_0/2^K$.

Now let us define $\tilde n_0$ and $\tilde n_1$ as the number of holes and non-holes retained after the exploration phase. By following the same line of argument as in the proof of Theorem~\ref{thm:NA_bound} and by appropriately applying Lemma~\ref{lmm:extreme} we can show that
\begin{equation}\label{eq:thm:adaptive1}
P_{\mbox{\tiny A}}(n)\doteq 1- \left(1+[(1+\gamma_n)^M\cdot\frac{\tilde n_0}{\tilde n_1}\epsilon_n]^{-1}\right)^{-T}\ .
\end{equation}
Note that for sufficiently large $n$
\begin{equation*}
    \frac{\tilde n_0}{\tilde n_1}\rightarrow\frac{m_K}{\ell_K}\rightarrow \frac{m_0}{\ell_0}\cdot 2^K=\frac{n_0}{n_1}\cdot 2^K\rightarrow 2^K\epsilon_n\ .
\end{equation*}
Plugging this asymptotic value of $\frac{\tilde n_0}{n_1}$ into \eqref{eq:thm:adaptive1} completes the proof.

\section{Proof of Theorem \ref{th:agility}}
\label{app:th:agility}
Based on the structure of the experimental design, for a given $Mn$ sampling budget the non-adaptive procedure takes $M$ observations per channel. By equating the error probabilities of hole detection in the non-adaptive and adaptive schemes given in (\ref{eq:P_NA}) and (\ref{eq:P_adaptive}) we find that as $n\rightarrow\infty$
\begin{equation*}
    (1+\gamma_n)^{M} \epsilon_n\doteq (1+\gamma_n)^{M_{K+1}} \cdot 2^K\epsilon_n\ ,
\end{equation*}
which shows that
\begin{equation}
    \label{eq:M_K} M_{K+1}\doteq M-\frac{K}{\log(1+\gamma_n)}\leq M\ .
\end{equation}
Next, let us evaluate the sampling budget required by the non-adaptive procedure given that in the detection phase we take $M_{K+1}$ samples per channel that has survived the exploration phase. As discussed earlier, the characterization of $P_{\tiny{\mbox{A}}}(n)$ conveys that the optimal experimental design is to take 1 sample per channel in each exploration cycle. Therefore, the aggregate amount of samples taken throughout exploration and detection phases is $\sum_{k=0}^{K}(m_k+\ell_k)$. On other hand, from lemmas~\ref{lmm:hole} and \ref{lmm:occupied} we find that in the asymptote of large $n$ we have $m_k=m_0$ and $\ell_k=\frac{1}{2^{k}}\ell_1$ with probability 1 for $k=1,\dots,K$. Recalling that $\frac{m_0}{\ell_0}\rightarrow\epsilon_n\rightarrow 0$ and $\ell_0\doteq n$ we find that the sampling budget required by the adaptive procedure is {\em asymptotically equal} to
\begin{align}\label{eq:M_K1}
    \nonumber \sum_{k=0}^{K-1}\ell_k+\ell_KM_{K+1} & \doteq \sum_{k=0}^{K-1}\frac{n}{2^{k}}+\frac{n}{2^K}\cdot M_{K+1}\\
    \nonumber &=n\left(2-\frac{1}{2^{K-1}}+\frac{M_{K+1}}{2^K}\right)\\
    & \leq n\left(2+\frac{M_{K+1}}{2^K}\right)\ .
\end{align}
Equations \eqref{eq:M_K} and \eqref{eq:M_K1} together show that the sampling budget of the adaptive procedure is asymptotically upper bounded by $n(2+M/{2^K})$. As a result, for maintaining $P_{\tiny{\mbox{A}}}(n)=P_{\tiny{\mbox{NA}}}(n)$, the asymptotic ratio of the sampling budget required by the non-adaptive procedure to that required by the adaptive scheme is lower bounded by
\begin{equation*}
    \frac{Mn}{n(2+M/{2^K})}=\left(\frac{2}{M}+\frac{1}{2^K}\right)^{-1}\ .
\end{equation*}
\section{Proof of Corollary \ref{cor:power_adaptive}}
\label{app:cor:power_adaptive}
First we find how $M_K$ and $M$ should be related such that they give rise to the same sampling budget required by the adaptive and non-adaptive schemes. By following a similar line of argument as in the proof of Theorem \ref{th:agility} and \eqref{eq:M_K1} we find that
\begin{align*}
    \nonumber Mn =\sum_{k=0}^{K}(m_k+\ell_k) \leq  n\Bigg(\frac{2^{K}-1}{2^{K-1}}+\frac{M_{K+1}}{2^{K}}\Bigg),
\end{align*}
which provides $M_{K+1}\geq 2^{K}(M-2)+2$. On the other hand, from \eqref{eq:P_adaptive} forcing $P_{\mbox{\tiny A}}(n)\rightarrow 0$ yields that it is necessary and sufficient to have
\begin{equation*}
    \gamma_n= \omega\left(\sqrt[M_{K+1}]{\frac{1}{2^K\epsilon_n}}\right)\ ,
\end{equation*}
Setting $M'=M_{K+1}$ establishes the desired result.

\section{Proof of Theorem \ref{th:K}}
\label{app:th:K}
A key constraint that guarantees retaining almost all the holes throughout the exploration phase is cast in \eqref{eq:1} as
\begin{equation}\label{eq:1_1}
    \lim_{n\rightarrow\infty}\Big(1-\frac{1}{1+\gamma_n}\Big)^K=1\ .
\end{equation}
According to Corollary~\ref{cor:power_adaptive}, guaranteeing this equality necessitates that $K$ does not grow faster than $\gamma_n$, i.e.,
\begin{equation}\label{eq:1_2}
K\in o\left(\gamma_n\right)\quad\mbox{while}\quad \gamma_n= \omega\left(\sqrt[M']{\frac{1}{2^K\epsilon_n}}\right)\  .
\end{equation}
Define
\begin{equation}\label{eq:1_3}
\Omega_n\dff\{\gamma_n\med \gamma_n\mbox{ does {\bf not} satisfy \eqref{eq:gamma_adaptive}}\}\ .
\end{equation}
From \eqref{eq:1_2}-\eqref{eq:1_3} we obtain that
\begin{equation}\label{eq:1_4}
K=\sup_{\Omega_n}\gamma_n\doteq \sqrt[M']{\frac{1}{2^K\epsilon_n}}\quad\Rightarrow\quad M'\log K+2\doteq \log \frac{1}{\epsilon_n}\ .
\end{equation}
By taking into account the characterization of $M'$ in Corollary~\ref{cor:power_adaptive} we find the following upper bound on $K$.
\begin{equation}\label{eq:1_5}
(2^K(M-2)+2)\log K+2\dotlt\log \frac{1}{\epsilon_n}\ .
\end{equation}
Next, we identify the scaling law of $K$ with respect to $n$ such that the asymptotic inequality above is satisfied. As $K$ scales, the constant summands (i.e., +2) will be dominated by other summands, the inequality above can be stated as
\begin{equation}\label{eq:1_6}
\log(M-2)+K+\log\log K\dotlt\log\log\frac{1}{\epsilon_n}\ .
\end{equation}
$(M-2)$ being a constant as well as $\log\log K$ are both dominated by $K$, giving rise to the following upper bound on $K$
\begin{equation}\label{eq:1_7}
K\dotlt\log\log\frac{1}{\epsilon_n}\ .
\end{equation}

We also need to examine the constraints on $K$ that guarantee that the exploration phase discards the non-holes as expected in the proof of Theorem~\ref{thm:adaptive}. More specifically, we need to ensure that the two following equalities hold.
\begin{equation}\label{eq:2}
    \lim_{n\rightarrow\infty}\Big(\frac{1}{2}-\frac{1}{\log n}\Big)^K=1\quad\mbox{and}\quad \lim_{n\rightarrow\infty}\Big(\frac{1}{2}+\frac{1}{\log n}\Big)^K=1\ ,
\end{equation}
which indicate that $K$ must satisfy
\begin{equation}\label{eq:2_1}
K\in o(\log n)\ .
\end{equation}
By recalling the assumption $n\epsilon_n=\omega(1)$, or its equivalent form $\frac{1}{\epsilon_n}=o(n)$, it can be readily verified that the bound given in \eqref{eq:1_7} is stronger than that in \eqref{eq:2_1}. By taking into account the monotonicity of the detection performance with increasing $K$, the optimal choice of $K$ is the upper bound in \eqref{eq:1_7}.


\section{Proof of Remark \ref{rem:MAP}}
\label{app:rem:MA}
The probability that a fixed set of channels ${\cal U}$ with cardinality $T$ contains only holes is
\begin{align}
\nonumber & P({\cal U}\subseteq \H_0\med \D_n) =  \prod_{i\in {\cal U}}P(i\in \H_0\med \D_n)\\
\nonumber &=\prod_{i\in {\cal U}}\frac{P(i\in \H_0)\cdot P(\D_n\med i\in \H_0)}{P(\D_n)}\\
\nonumber & = \prod_{i\in {\cal U}}\frac{P(i\in\H_0)\cdot P\big(\bX_i\med i\in \H_0\big)}{P\big(\bX_i\big)}\\
\nonumber &=\prod_{i\in {\cal U}}  \frac{\frac{\epsilon_n}{\pi^M} \exp{\left(-\|\bX_i\|^2\right)}}{\frac{\epsilon_n}{\pi^M} \exp{\left(-\|\bX_i\|^2\right)}+ \frac{1-\epsilon_n}{(\pi(1+p_i))^M}\exp{\left(-\frac{1}{1+p_i}\|\bX_i\|^2\right)}}\\
\label{eq:baysian}  &=\prod_{i\in {\cal U}}\left[1+\frac{1-\epsilon_n}{\epsilon_n} \cdot\frac{1}{(1+p_i)^M}\exp{\left(\frac{p_i}{1+p_i}\|\bX_i\|^2\right)}\right]^{-1}\ ,
\end{align}
where the above expansion follows from Bayes's rule and the statistical independence of the observations made over different channels. Hence, the MAP detector that maximizes $P({\cal U}\subseteq \H_0\med \D_n)$ among all sets of channels with cardinality $T$ is given by
\begin{eqnarray}\label{eq:MAP}
\nonumber \widehat {\cal U}_{\tiny{\mbox{MAP}}}^{\tiny{\mbox{NA}}} &\dff& \arg\max_{{\cal U}:\ |{\cal U}|=T}\log P({\cal U}\subseteq \H_0\med \D_n) \\
&=&\arg\min_{{\cal U}:\ |{\cal U}|=T}\sum_{i\in {\cal U}}\log\left[1+\frac{1-\epsilon_n}{\epsilon_n} \cdot U_i\right]\ .
\end{eqnarray}
where
\begin{equation}\label{eq:U}
    U_i=\frac{1}{(1+p_i)^M}\exp{\left(\frac{p_i}{1+p_i}\|\bX_i\|^2\right)}\ .
\end{equation}
Therefore, $\widehat {\cal U}_{\tiny{\mbox{MAP}}}^{\tiny{\mbox{NA}}}$ contains the indices of the $T$ smallest elements of the set $\{U_1,\dots,U_n\}$.

{\bibliographystyle{IEEEtran} \bibliography{IEEEabrv,CR_ASA}}

\begin{IEEEbiographynophoto}{Ali Tajer}(S'05, M'10) is a Postdoctoral Research Associate at Princeton University and an Adjunct Assistant Professor at Columbia University. He received the Ph.D. degree in Electrical Engineering from Columbia University, New York, NY in 2010. Prior to that he received the B.Sc. and M.Sc. degrees in Electrical Engineering from Sharif University of Technology, Tehran, Iran, in 2002 and 2004, respectively. He spent the summers of 2008 and 2009 as a Research Assistant at NEC Laboratories America, Princeton, NJ. His research interests lie in the general areas of network information theory and statistical signal processing.
\end{IEEEbiographynophoto}
\begin{IEEEbiographynophoto}{Rui M. Castro} received a Licenciatura degree in aerospace engineering in 1998 from the Instituto Superior Tecnico, Technical University of Lisbon, Portugal, and a Ph.D. degree in electrical and computer engineering from Rice University, Houston, Texas in 2008.  Between 1998 and 2000, he was a researcher with the Communication Theory and Pattern Recognition Group, Institute of Telecommunications, Lisbon, and in 2002 he held a visiting researcher position at the Mathematics Research Center, Bell Laboratories Research.  He was a postdoctoral fellow at the University of Wisconsin in 2007-2008, and between 2008 and 2010 he held an Assistant Professor position in the department of electrical engineering at Columbia University.  He is currently an assistant professor in the Department of Mathematics and Computer Science of the Eindhoven University of Technology (TU/e), in the Netherlands.  His broad research interests include learning theory, statistical signal and image processing, network inference, and pattern recognition.  Mr. Castro received a Rice University Graduate Fellowship in 2000 and a Graduate Student Mentor Award from the University of Wisconsin in 2008.

\end{IEEEbiographynophoto}
\begin{IEEEbiographynophoto}{Xiaodong Wang} (S'98-M'98-SM'04-F'08) received the Ph.D. degree in Electrical Engineering from Princeton
University. He is a Professor of  Electrical Engineering at Columbia University in New York.
Dr. Wang's research interests fall in the general areas of computing, signal processing
 and communications,
 and has published extensively in these areas. Among his
publications is a book entitled ``Wireless Communication Systems: Advanced Techniques for Signal
Reception'', published by Prentice Hall in 2003.  His current research interests include wireless
communications,   statistical signal processing,
 and genomic signal processing.
  Dr. Wang received the 1999 NSF CAREER Award,   the 2001
 IEEE Communications Society and Information Theory Society Joint Paper Award,
 and the 2011 IEEE Communication Society Award for Outstanding Paper on
 New Communication Topics. He has served  as an
 Associate Editor for the {\em IEEE Transactions on Communications}, the {\em IEEE Transactions on
 Wireless Communications},
   the {\em IEEE Transactions on Signal Processing},
   and   the {\em IEEE Transactions on Information Theory}. He is a Fellow of the IEEE and listed
   as an ISI Highly-cited Author.
\end{IEEEbiographynophoto}
\end{document}